\newcommand{\paper}[1]{}
\newcommand{\nonl}{\renewcommand{\nl}{\let\nl\oldnl}}
\newif\if@restonecol
\definecolor{Gray}{gray}{0.95}
\definecolor{LightCyan}{rgb}{0.88,1,1}
\newcommand{\fct}{\ensuremath{\mathsf{fact}}}
\newcommand{\formula}[1]{\ensuremath{{\Phi(#1)}}}
\newcommand{\err}{\ensuremath{\mathsf{Err}}}
\newcommand{\FF}{\ensuremath{\overrightarrow{F}}}
\newcommand{\eat}[1]{}
\newcommand{\op}{\ensuremath{\mathsf{op}}}
\newcommand{\subst}[2]{#1\mapsto #2} 
\newcommand{\mcomment}[1]{}
\newcommand{\ite}{\ensuremath{\mathsf{ite}}}
\newcommand{\cba}[1]{{\Delta_{#1}}}
\newcommand{\cbb}[1]{{\Gamma_{#1}}}
\newcommand{\cbva}[1]{\overrightarrow{\Delta}(#1)}
\newcommand{\cbvb}[1]{\overrightarrow{\Gamma}(#1)}
\newcommand{\cbvasimp}{\overrightarrow{\Delta}}
\newcommand{\cbvbsimp}{\overrightarrow{\Gamma}}
\newcommand{\cbar}[1]{\delta_{#1}}
\newcommand{\cbbr}[1]{\gamma_{#1}}
\newcommand{\true}{\textsf{true}}
\newcommand{\false}{\textsf{false}}
\newcommand{\bparcegar}{\mathsf{\mathbf{ParSyn}}}
\newcommand{\parcegar}{\mathsf{ParSyn}}
\newcommand{\cegarsk}{\mathsf{CSk}}
\newcommand{\bcegarsk}{\mathsf{\mathbf{CSk}}}
\newcommand{\bloqqer}{\textsf{Bloqqer}}
\newcommand{\rsynth}{\textsf{RSynth}}
\newcommand{\qrat}{\textsf{qrat-trim}}
\tikzset{
block/.style={
  draw, 
  rectangle, 
  minimum height=0cm, 
  minimum width=2cm, align=center,
  }, 
line/.style={->,>=latex'}
}
\tikzstyle{galivenode}=[circle,fill=green!50!black,thick,inner sep=1pt,minimum size=4mm]
\tikzstyle{ralivenode}=[circle,fill=red!70!black,thick,inner sep=1pt,minimum size=4mm]
\tikzstyle{alivenode}=[circle,fill=black!80,thick,inner sep=1pt,minimum size=4mm]
\tikzstyle{deadnode}=[circle,fill=black!10,thick,inner sep=0pt,minimum size=4mm]
\tikzstyle{rdnode}=[circle,draw,fill=red!10,thick,inner sep=2pt,minimum size=4mm]
\tikzstyle{grnode}=[circle,draw,fill=green!10,thick,inner sep=2pt,minimum size=4mm]
\tikzstyle{ynode}=[rectangle,draw=black,fill=yellow!10,thick,inner sep=3pt,minimum size=4mm]
\tikzstyle{gnode}=[rectangle,fill=green!10,thick,inner sep=1pt,minimum size=4mm]
\tikzstyle{bnode}=[rectangle,fill=red!10,thick,inner sep=1pt,minimum size=4mm]
\begin{document}

\setlength{\pdfpageheight}{\paperheight}
\setlength{\pdfpagewidth}{\paperwidth}

\title{Towards Parallel Boolean Functional Synthesis}

\author{S. Akshay\inst{1} \and Supratik Chakraborty\inst{1} \and Ajith K. John\inst{2} \and Shetal Shah\inst{1}}

\institute{IIT Bombay, India
  \and
HBNI, BARC, India}

\maketitle

\begin{abstract}
Given a relational specification $\varphi(X, Y)$, where $X$ and $Y$
are sequences of input and output variables, we wish to synthesize
each output as a function of the inputs such that the specification
holds. This is called the Boolean functional synthesis problem and has
applications in several areas. In this paper, we present the first
parallel approach for solving this problem, using compositional and
CEGAR-style reasoning as key building blocks. We show by means of
extensive experiments that our approach outperforms existing tools on
a large class of benchmarks.
\end{abstract}

\keywords{Synthesis, Skolem functions, Parallel algorithms, CEGAR}

\section{Introduction}
\label{sec:introduction}
Given a relational specification of input-output behaviour,
synthesizing outputs as functions of inputs is a key step in
several applications, viz. program repair~\cite{JGB05}, program
synthesis~\cite{Gulwani}, adaptive control~\cite{RW87} etc.  The
synthesis problem is, in general, uncomputable.  However, there are
practically useful restrictions that render the problem solvable, e.g., if all inputs and outputs are Boolean, the problem is
computable in principle.  Nevertheless, functional synthesis may still
require formidable computational effort, especially if there are a
large number of variables and the overall specification is
complex.  This motivates us to investigate techniques for Boolean
functional synthesis that work well in practice.

Formally, let $X$ be a sequence of $m$ input Boolean variables, and
$Y$ be a sequence of $n$ output Boolean variables.  A relational
specification is a Boolean formula $\varphi(X, Y)$ that expresses a
desired input-output relation.  The goal in Boolean functional
synthesis is to synthesize a function
$F: \{0,1\}^m\rightarrow \{0,1\}^n$ that satisfies the specification.
Thus, for every value of $X$, if there exists some value of $Y$ such
that $\varphi(X,Y)=1$, we must also have $\varphi(X,F(X)) = 1$.  For
values of $X$ that do not admit any value of $Y$ such that
$\varphi(X,Y) = 1$, the value of $F(X)$ is inconsequential.  Such a
function $F$ is also refered to as a \emph{Skolem function} for $Y$ in
$\varphi(X,Y)$~\cite{bierre,fmcad2015:skolem}.

An interesting example of Boolean functional synthesis is the problem
of integer factorization.  Suppose $Y_1$ and $Y_2$ are $n$-bit
unsigned integers, $X$ is a $2n$-bit unsigned integer and
$\times_{[n]}$ denotes $n$-bit unsigned multiplication.  The
relational specification $\varphi_{\fct}(X, Y_1, Y_2) \equiv
((X=Y_1\times_{[n]} Y_2) \wedge (Y_1\neq 1) \wedge (Y_2\neq 1))$
specifies that $Y_1$ and $Y_2$ are non-trivial factors of $X$.  This
specification can be easily encoded as a Boolean relation. 
 The corresponding synthesis problem requires
us to synthesize the factors $Y_1$ and $Y_2$ as functions of $X$,
whenever $X$ is non-prime. Note that this problem is known to be hard,
and the strength of several cryptographic systems rely on this
hardness.

Existing approaches to Boolean functional synthesis vary widely in
their emphasis, ranging from purely theoretical treatments
(viz.~\cite{boole1847,lowenheim1910,deschamps1972,boudet1989,martin1989,baader1999})
to those motivated by practical tool development
(viz.~\cite{macii1998,bierre,Jian,Trivedi,fmcad2015:skolem,rsynth,KS00,BCK09,Gulwani,SRBE05,KMPS10}). A
common aspect of these approaches is their focus on sequential
algorithms for synthesis.  In this paper, we present, to the best of
our knowledge, the first parallel algorithm for Boolean functional
synthesis. A key ingredient of our approach is a technique for solving
the synthesis problem for a specification $\varphi$ by composing
solutions of synthesis problems corresponding to sub-formulas in
$\varphi$.  Since Boolean functions are often represented using
DAG-like structures (such as circuits, AIGs~\cite{aiger},
ROBDDs~\cite{akers1978,bryant1986}), we assume w.l.o.g. that $\varphi$
is given as a DAG.  The DAG structure provides a natural decomposition
of the original problem into sub-problems with a partial order of
dependencies between them.  We exploit this to design a parallel
synthesis algorithm that has been implemented on a message passing
cluster.  Our initial experiments show that our algorithm
significantly outperforms state-of-the-art techniques on several
benchmarks.

\paragraph{Related work:}
The earliest solutions to Boolean functional synthesis date back to
Boole~\cite{boole1847} and Lowenheim~\cite{lowenheim1910}, who
considered the problem in the context of Boolean unification.
Subsequently, there have been several investigations into theoretical
aspects of this problem (see
e.g.,~\cite{deschamps1972,boudet1989,martin1989,baader1999}).  More
recently, there have been attempts to design practically efficient
synthesis algorithms that scale to much larger problem
sizes. In~\cite{bierre}, a technique to synthesize $Y$ from a proof of
validity of $\forall X \exists Y \varphi(X,Y)$ was proposed.  While
this works well in several cases, not all specifications admit the
validity of $\forall X \exists Y \varphi(X,Y)$.  For example, $\forall
X \exists Y \varphi_{\fct}(X,Y)$ is not valid in the factorization
example.  In~\cite{Jian,Trivedi}, a synthesis approach based on
functional composition was proposed.  Unfortunately, this does not
scale beyond small problem instances~\cite{fmcad2015:skolem,rsynth}.
To address this drawback, a CEGAR based technique for synthesis
from \emph{factored} specifications was proposed
in~\cite{fmcad2015:skolem}.  While this scales well if each factor in
the specification depends on a small subset of variables, its
performance degrades significantly if we have a few ``large'' factors,
each involving many variables, or if there is significant sharing of
variables across factors.  In~\cite{macii1998}, Macii et al
implemented Boole's and Lowenheim's algorithms using ROBDDs and
compared their performance on small to medium-sized benchmarks.  Other
algorithms for synthesis based on ROBDDs have been investigated
in~\cite{KS00,BCK09}.  A recent work~\cite{rsynth} adapts the
functional composition approach to work with ROBDDs, and shows that
this scales well for a class of benchmarks with pre-determined
variable orders.  However, finding a good variable order for an
arbitrary relational specification is hard, and our experiments show
that without prior knowledge of benchmark classes and corresponding
good variable orders, the performance of~\cite{rsynth} can degrade
significantly. Techniques using \emph{templates}~\cite{Gulwani}
or \emph{sketches}~\cite{SRBE05} have been found to be effective for
synthesis when we have partial information about the set of candidate
solutions. A framework for functional synthesis, focused on unbounded
domains such as integer arithmetic, was proposed in~\cite{KMPS10}.
This relies heavily on tailor-made smart heuristics that exploit specific
form/structure of the relational specification.

\section{Preliminaries}
\label{sec:preliminaries}
Let $X = (x_1, \ldots x_m)$ be the sequence of input variables, and $Y
= (y_1, \ldots y_n)$ be the sequence of output variables in the
specification $\varphi(X,Y)$.  Abusing notation, we use $X$
(resp. $Y$) to denote the set of elements in the sequence $X$
(resp. $Y$), when there is no confusion.  We use $1$ and $0$ to denote
the Boolean constants {\true} and {\false}, respectively.
A \emph{literal} is either a variable or its complement. An assignment
of values to variables \emph{satisfies} a formula if it makes the formula {\true}.

\begin{wrapfigure}[12]{r}{0.6\textwidth}
\vspace*{-0.3in}
\scalebox{0.8}{
  \begin{tikzpicture}[sibling distance=5em,
  main node/.style = {shape=circle,    draw, align=center},
  tri node/.style={shape=regular polygon, regular polygon sides=3, draw, scale=0.8},    
  edge from parent/.style={draw},
  no edge below/.style={    
        every child/.append style={solid,       
         edge from parent/.style={dashed}}},
  level 2/.style={level distance=8mm}
  level 1/.style={level distance=8mm},
  level 3/.style={sibling distance=2em} 
]
    \node[main node] (1) {$\wedge$}
  child {
    node[main node] (2) {$\vee$}
    child {
      node[main node] (5) {$\wedge$}
      child{node (x1){$x_1$}}
      child{node (y1){$y_1$}}
    }
    child{
      node[main node] (6) {$\vee$}
      }
  }
  child { node[main node] (3) {$\wedge$}
    child {node[main node] (6) {$\vee$}
      child{node (x2){$\neg x_2$}}
      child{node (y2){$0$}}
    }
    child {node[main node] (7) {$\wedge$}
      child {node (x3){$x_3$}}
      child {node (y3){$\neg y_3$}}
      child{node (t) {$\ldots$}}
    }
  }
  child { node {$\ldots$}[no edge below]}
  child {
    node[main node] (4) {$\wedge$}
    child{node (s){$1$}}
    child {node (8) {$x_{m-1}$}}
    child {node[main node] (9) {$\wedge$}
      child{node (x4){$x_m$}}
      child{node (y4){$y_n$}}
    }
  }
  ;
\end{tikzpicture}
}
\caption{DAG representing $\varphi(X,Y)$}
\label{fig:decomp}
\end{wrapfigure}
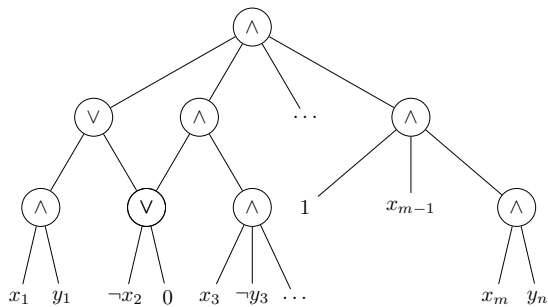
We assume that the specification $\varphi(X,Y)$ is represented as a
rooted DAG, with internal nodes labeled by Boolean operators and
leaves labeled by input/output literals and Boolean constants.  If
the operator labeling an internal node $N$ has arity $k$, we assume
that $N$ has $k$ ordered children.  Fig.~\ref{fig:decomp} shows an
example DAG, where the internal nodes are labeled by AND and
OR operators of different arities. Each node $N$ in such a DAG
represents a Boolean formula $\formula{N}$, which is inductively
defined as follows. If $N$ is a leaf, $\formula{N}$ is the label of
$N$.  If $N$ is an internal node labeled by $\op$ with arity $k$, and
if the ordered children of $N$ are $c_1, \ldots c_k$, then
$\formula{N}$ is $\op(\formula{c_1}, \ldots \formula{c_k})$.  A DAG
with root $R$ is said to represent the formula $\formula{R}$.  Note
that popular DAG representations of Boolean formulas, such as AIGs,
ROBDDs and Boolean circuits, are special cases of this representation.

A $k$-ary Boolean function $f$ is a mapping from $\{0,1\}^k$ to
$\{0,1\}$, and can be viewed as the semantics of a Boolean formula
with $k$ variables.  We use the terms ``Boolean function'' and
``Boolean formula'' interchangeably, using
formulas mostly to refer to specifications.
Given a Boolean formula $\varphi$ and a Boolean function $f$, we use
$\varphi[\subst{y}{f}]$ to denote the formula obtained by substituting
every occurrence of the variable $y$ in $\varphi$ with $f$.  The set
of variables appearing in $\varphi$ is called the \emph{support} of
$\varphi$.  If $f$ and $g$ are Boolean functions, we say that $f$
\emph{abstracts} $g$ and $g$ \emph{refines} $f$, if
$g \rightarrow f$, where $\rightarrow$ denotes logical implication.
 
Given the specification $\varphi(X, Y)$, our goal is to synthesize the
outputs $y_1, \ldots y_n$ as functions of $X$.  Unlike some earlier
work~\cite{bierre,skizzo,jiang2}, we \emph{do not assume the validity
of $\forall X \exists Y~ \varphi(X,Y)$}.  Thus, we allow the
possibility that for some values of $X$, there may be no value of $Y$
that satisfies $\varphi(X, Y)$.  This allows us to accommodate some
important classes of synthesis problems, viz. integer factorization.
If $y_1=f_1(X), \ldots y_n=f_n(X)$ is a solution to the synthesis
problem, we say that $(f_1(X), \ldots f_n(X))$ \emph{realizes} $Y$ in
$\varphi(X, Y)$.  For notational clarity, we simply use $(f_1, \ldots
f_n)$ instead of $(f_1(X), \ldots f_n(X))$ when $X$ is clear from the
context.

In general, an instance of the synthesis problem may not have a unique
solution.  The following proposition, stated in various forms in the
literature, characterizes the space of all solutions, when we have one
output variable $y$.
\begin{proposition}
\label{prop:soln-space}
A function $f(X)$ realizes $y$ in $\varphi(X, y)$ iff the following holds:\\
$\varphi[\subst{y}{1}] \wedge \neg \varphi[\subst{y}{0}] 
~\rightarrow~ f(X)$ and $f(X) ~\rightarrow~
\varphi[\subst{y}{1}] \vee \neg \varphi[\subst{y}{0}]$.
\end{proposition}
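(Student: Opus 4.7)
The plan is to reformulate both sides into logically clean statements involving the two residuals $\varphi_1 := \varphi[\subst{y}{1}]$ and $\varphi_0 := \varphi[\subst{y}{0}]$, and then prove each direction by case analysis on the values of $\varphi_1$ and $\varphi_0$ at a given $X$.

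First I would observe two basic identities following from Shannon expansion. The existential closure satisfies $\exists y\,\varphi(X,y) \equiv \varphi_1 \vee \varphi_0$, and the substituted formula satisfies $\varphi(X, f(X)) \equiv (f(X) \wedge \varphi_1) \vee (\neg f(X) \wedge \varphi_0)$. By the definition of ``realizes'' recalled just above the proposition, $f$ realizes $y$ in $\varphi(X,y)$ precisely when, for every $X$, $\varphi_1 \vee \varphi_0$ implies $(f(X) \wedge \varphi_1) \vee (\neg f(X) \wedge \varphi_0)$. I would also rewrite the second conjunct of the right-hand side, $f(X) \rightarrow \varphi_1 \vee \neg \varphi_0$, by contrapositive to the more suggestive form $\neg \varphi_1 \wedge \varphi_0 \rightarrow \neg f(X)$. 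After this reformulation, the two required conditions become: (i) wherever only $y = 1$ satisfies $\varphi$, the function $f$ must output $1$; and (ii) wherever only $y = 0$ satisfies $\varphi$, the function $f$ must output $0$.

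For the ``only if'' direction, I would fix an arbitrary $X$ and consider the two cases $\varphi_1 \wedge \neg \varphi_0$ and $\neg \varphi_1 \wedge \varphi_0$. In each case $\exists y\, \varphi(X,y)$ is true, so by the realizes property $\varphi(X, f(X)) = 1$; combined with the fact that exactly one of the two values of $y$ satisfies $\varphi$, this pins down the value of $f(X)$ and yields (i) and (ii) respectively.

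For the ``if'' direction, assume (i) and (ii) and pick any $X$ with $\varphi_1 \vee \varphi_0$. Split into three cases: $\varphi_1 \wedge \neg \varphi_0$ (use (i) to conclude $f(X)=1$, hence $\varphi(X,f(X)) = \varphi_1 = 1$), $\neg \varphi_1 \wedge \varphi_0$ (use (ii) analogously), and $\varphi_1 \wedge \varphi_0$ (both branches of the Shannon expansion give $1$, so the value of $f(X)$ is immaterial). The proof is really just a careful bookkeeping of cases, so I do not expect any genuine obstacle; the only subtlety is recognising that condition (ii) as written is the contrapositive of the natural ``forced to $0$'' condition, which is what makes the statement symmetric in $0$ and $1$.
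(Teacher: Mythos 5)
Your proof is correct. Note, however, that the paper itself offers no proof of this proposition --- it is stated as ``implicit in the literature'' and used as a black box (the only thing the authors extract from it is the corollary that $\varphi[y\mapsto 1]$ and $\neg\varphi[y\mapsto 0]$ are themselves valid Skolem functions). So there is nothing in the paper to compare your argument against; what you have written is the standard argument that fills this gap. Your two reformulations --- $\exists y\,\varphi \equiv \varphi_1 \vee \varphi_0$ and $\varphi(X,f(X)) \equiv (f\wedge\varphi_1)\vee(\neg f\wedge\varphi_0)$ via Shannon expansion --- together with the contrapositive reading of the second conjunct reduce both directions to a clean three-case analysis on $(\varphi_1,\varphi_0)$, and each case closes as you describe. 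One small point worth making explicit if you write this up: the definition of ``realizes'' quantifies over all $X$, so the displayed implications in the proposition are to be read as tautologies (valid for every assignment to $X$), which is exactly what your pointwise case analysis establishes.
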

As a corollary, both $\varphi[\subst{y}{1}]$ and
$\neg \varphi[\subst{y}{0}]$ realize $y$ in $\varphi(X,y)$.
Proposition~\ref{prop:soln-space} can be easily extended when we have
multiple output variables in $Y$.  Let $\sqsubseteq$ be a total
ordering of the variables in $Y$, and assume without loss of generality that $y_1 \sqsubseteq
y_2 \sqsubseteq \cdots y_n$.  
Let $\FF$ denote the vector of Boolean functions $(f_1(X), \ldots
f_n(X))$.  For $i \in \{1, \ldots
n\}$, define $\varphi^{(i)}$ to be $\exists y_1 \ldots
\exists y_{i-1}\, \varphi$, and $\varphi^{(i)}_{\FF}$ to be 
$(\cdots (\varphi^{(i)}[\subst{y_{i+1}}{f_{i+1}}]) \cdots
  )[\subst{y_n}{f_n}]$, with the obvious modifications for $i = 1$ (no
  existential quantification) and $i = n$ (no substitution).  The
  following proposition, once again implicit in the literature,
  characterizes the space of all solutions ${\FF}$ that realize $Y$ in
  $\varphi(X,Y)$.
\begin{proposition}
\label{prop:soln-space-multi}
The function vector $\FF = (f_1(X), \ldots f_n(X))$ realizes $Y = (y_1, \ldots
  y_n)$ in $\varphi(X, Y)$ iff the following holds for every
  $i \in \{1, \ldots n\}$:\\
$\varphi^{(i)}_{\FF}[\subst{y_i}{1}] \wedge \neg \varphi^{(i)}_{\FF}[\subst{y_i}{0}] 
\rightarrow f_i(X)$, and
$f_i(X) \rightarrow
\varphi^{(i)}_{\FF}[\subst{y_i}{1}] \vee \neg \varphi^{(i)}_{\FF}[\subst{y_i}{0}].
$
\end{proposition}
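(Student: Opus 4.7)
The plan is to reduce Proposition~\ref{prop:soln-space-multi} to the single-variable Proposition~\ref{prop:soln-space} by applying the latter, for each $i$, to the derived formula $\varphi^{(i)}_{\FF}(X, y_i)$. The key observation is that $\varphi^{(i)}_{\FF}$ has exactly $X$ and $y_i$ as its free variables: the variables $y_1,\ldots,y_{i-1}$ have been existentially quantified away and $y_{i+1},\ldots,y_n$ have been replaced by $f_{i+1},\ldots,f_n$. By direct inspection, the two implications displayed in Proposition~\ref{prop:soln-space-multi} are precisely the two implications of Proposition~\ref{prop:soln-space} applied to $\varphi^{(i)}_{\FF}(X, y_i)$ regarded as a single-output specification. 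So those two conditions, taken together over all $i$, are equivalent to: for every $i \in \{1,\ldots,n\}$, $f_i(X)$ realizes $y_i$ in $\varphi^{(i)}_{\FF}(X, y_i)$.

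It therefore suffices to show that $\FF$ realizes $Y$ in $\varphi(X,Y)$ iff $f_i$ realizes $y_i$ in $\varphi^{(i)}_{\FF}(X, y_i)$ for every $i$. For the ($\Rightarrow$) direction, I would fix $i$ and $X$ with $\exists y_i.\,\varphi^{(i)}_{\FF}(X, y_i) = 1$. Unfolding the existential prefix gives values $y_1,\ldots,y_i$ with $\varphi(X, y_1,\ldots,y_i, f_{i+1}(X),\ldots,f_n(X)) = 1$, hence $\exists Y.\,\varphi(X,Y)$ holds and realizability of $Y$ by $\FF$ yields $\varphi(X,\FF(X)) = 1$. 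Taking $y_j = f_j(X)$ for $j < i$ as witnesses to the leading existentials of $\varphi^{(i)}_{\FF}$ then gives $\varphi^{(i)}_{\FF}(X, f_i(X)) = 1$, as required.

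For the ($\Leftarrow$) direction, fix $X$ with $\exists Y.\,\varphi(X,Y)$, and perform a downward induction on $i$ from $n$ to $1$ maintaining the invariant $\varphi^{(i)}_{\FF}(X, f_i(X)) = 1$. The base case $i = n$ follows because $\exists y_n.\,\varphi^{(n)}_{\FF}(X, y_n)$ coincides with $\exists Y.\,\varphi(X,Y)$ (there is no substitution at $i=n$), so the single-variable realization property of $f_n$ gives the invariant. For the step from $i+1$ down to $i$, the inductive hypothesis $\varphi^{(i+1)}_{\FF}(X, f_{i+1}(X)) = 1$ is, after unfolding the definitions, the same as $\exists y_i.\,\varphi^{(i)}_{\FF}(X, y_i) = 1$, so the realization property for $f_i$ again yields $\varphi^{(i)}_{\FF}(X, f_i(X)) = 1$. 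Driving $i$ down to $1$ (where there is no existential prefix) recovers $\varphi(X,\FF(X)) = 1$.

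The main obstacle is the syntactic bookkeeping in the inductive step: one must check carefully from the definitions of $\varphi^{(i)}$ and $\varphi^{(i)}_{\FF}$ that existentially quantifying $y_i$ in $\varphi^{(i)}_{\FF}(X, y_i)$ and then re-exposing $y_{i+1}$ reproduces $\varphi^{(i+1)}_{\FF}(X, f_{i+1}(X))$. This is a routine commutation of substitution with existential quantification over disjoint variables, but it is the one place where the proof is not purely mechanical and where an off-by-one in the indexing would cause the induction to fail.
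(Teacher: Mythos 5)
Your proof is correct. Note that the paper itself offers no proof of Proposition~\ref{prop:soln-space-multi} --- it is stated as ``once again implicit in the literature'' --- so there is nothing to compare against; your argument supplies the missing justification in the standard way. The two halves check out: the displayed conditions for index $i$ are exactly Proposition~\ref{prop:soln-space} applied to the single-output specification $\varphi^{(i)}_{\FF}(X,y_i)$, and the bridging equivalence (that $\FF$ realizes $Y$ in $\varphi$ iff each $f_i$ realizes $y_i$ in $\varphi^{(i)}_{\FF}$) follows from your forward argument together with the downward induction, whose key identity $\exists y_i\,\varphi^{(i)}_{\FF}(X,y_i) = \varphi^{(i+1)}_{\FF}(X,f_{i+1}(X))$ is legitimate because each $f_j$ has support contained in $X$, so the substitutions $[\subst{y_j}{f_j}]$ commute with the existential prefix $\exists y_1\ldots y_{i-1}$ without capture.
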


Propositions~\ref{prop:soln-space} and \ref{prop:soln-space-multi} are
effectively used in~\cite{Jian,Trivedi,fmcad2015:skolem,rsynth} to
sequentially synthesize $y_1, \ldots y_n$ as functions of $X$.
Specifically, output $y_1$ is first synthesized as a function $g_1(X,
y_2, \ldots y_n)$.  This is done by treating $y_1$ as the sole output
and $X \cup \{y_2,\ldots y_n\}$ as the inputs in $\varphi(X, Y)$.  By
substituting $g_1$ for $y_1$ in $\varphi$, we obtain
$\varphi^{(2)} \equiv \exists y_1 \varphi(X, Y)$.  Output $y_2$ can
then be synthesized as a function $g_2(X, y_3, \ldots y_n)$ by
treating $y_2$ as the sole output and $X \cup \{y_3, \ldots y_n\}$ as
the inputs in $\varphi^{(2)}$.  Substituting $g_2$ for $y_2$ in
$\varphi^{(2)}$ gives $\varphi^{(3)}
\equiv \exists y_1 \exists y_2\, \varphi(X, Y)$.  This process is then
repeated until we obtain $y_n$ as a function $g_n(X)$.  The desired
functions $f_1(X), \ldots f_n(X)$ realizing $y_1, \ldots y_n$ can now
be obtained by letting $f_n(X)$ be $g_n(X)$, and $f_i(X)$ be $(\cdots
(g_i[\subst{y_{i+1}}{f_{i+1}(X)}]) \cdots)[\subst{y_n}{f_n(X)}]$, for
all $i$ from $n-1$ down to $1$.  Thus, given $\varphi(X,Y)$, it
suffices to obtain $(g_1, \ldots g_n)$, where $g_i$ has support
$X \cup \{y_{i+1}, \ldots y_n\}$, in order to solve the synthesis
problem.  We therefore say that $(g_1, \ldots g_n)$
\emph{effectively realizes} $Y$ in $\varphi(X, Y)$, and focus on obtaining
$(g_1, \ldots g_n)$.

Proposition~\ref{prop:soln-space} implies that for every
$i \in \{1, \ldots n\}$, the function
$g_i \equiv \varphi^{(i)}[\subst{y_i}{1}]$ realizes $y_{i}$ in
$\varphi^{(i)}$.  With this choice for $g_i$, it is easy to see that
$\exists y_i\, \varphi^{(i)}$ (or $\varphi^{(i+1)}$) can be obtained
as $\varphi^{(i)}[\subst{y_i}{g_i}]
= \varphi^{(i)}[\subst{y_i}{\varphi^{(i)}[\subst{y_i}{1}]}]$.  While
synthesis using quantifier elimination by
such \emph{self-substitution}~\cite{rsynth} has been shown to scale
for certain classes of specifications with pre-determined optimized
variable orders, our experience shows that this incurs significant
overheads for general specifications with unknown ``good'' variable
orders.  An alternative technique for \emph{factored} specification
was proposed by John et al~\cite{fmcad2015:skolem}, in which initial
abstractions of $g_1, \ldots g_n$ are first computed quickly, and then
a CEGAR-style~\cite{CEGAR-JACM} loop is used to refine these
abstractions to correct Skolem functions.  We use John et al's refinement
technique as a black-box module in our work; more on this is discussed
in Section~\ref{sec:simple}.
\begin{definition}
\label{def:cb0-cb1}
Given a specification $\varphi(X, Y)$, we define $\cba{y_i}(\varphi)$
to be the formula $\left(\neg \exists y_1 \ldots
y_{i-1}\, \varphi\right)[y_i \mapsto 0]$, and $\cbb{y_i}(\varphi)$ to
be the formula $\left(\neg \exists y_1
\ldots y_{i-1}\, \varphi\right)[y_i \mapsto 1]$,
for all $i \in \{1, \ldots n\}$\footnote{In~\cite{fmcad2015:skolem},
equivalent formulas were called $Cb0_{y_i}(\varphi)$ and
$Cb1_{y_i}(\varphi)$.}.  We also define $\cbva{\varphi}$ and $\cbvb{\varphi}$
to be the vectors  $(\cba{y_1}(\varphi), \ldots \cba{y_n}(\varphi))$
and $(\cbb{y_1}(\varphi), \ldots \cbb{y_n}(\varphi))$ respectively.
\end{definition}
If $N$ is a node in the DAG representation of the specification,
we abuse notation and use $\cba{y_i}(N)$ to denote
$\cba{y_i}(\formula{N})$, and similarly for $\cbb{y_i}(N)$, $\cbva{N}$
and $\cbvb{N}$.  Furthermore, if both $Y$ and $N$ are clear from the
context, we use $\cba{i}$, $\cbb{i}$, $\cbvasimp$ and $\cbvbsimp$
instead of $\cba{y_i}(N)$, $\cbb{y_i}(N)$, $\cbva{N}$ and $\cbvb{N}$,
respectively.  It is easy to see that the supports of both $\cbb{i}$
and $\cba{i}$ are (subsets of) $X \cup \{y_{i+1}, \ldots y_n\}$.
Furthermore, it follows from Definition~\ref{def:cb0-cb1} that whenever
$\cbb{i}$ (resp. $\cba{i}$) evaluates to $1$, if the output $y_i$ 
has the value $1$ (resp. $0$), then $\varphi$ must evaluate to $0$.
Conversely, if $\cbb{i}$ (resp. $\cba{i}$) evaluates to $0$, it
doesn't hurt (as far as satisfiability of $\varphi(X,Y)$ is concerned)
to assign the value $1$ (resp. $0$) to output $y_i$.  This suggests
that both $\neg \cbb{i}$ and $\cba{i}$ suffice to serve as the
function $g_i(X, y_{i+1}, \ldots y_n)$ when synthesizing functions for
multiple output variables.
The following proposition, adapted from~\cite{fmcad2015:skolem},
follows immediately, where we have abused notation and used
$\neg\cbvbsimp$ to denote $(\neg \cbb{1}, \ldots \neg \cbb{n})$.
\begin{proposition}
\label{prop:cb1-realizes} 
Given a specification $\varphi(X, Y)$, 
both $\cbvasimp$ and 
$\neg\cbvbsimp$ effectively realize $Y$ in
$\varphi(X,Y)$.
\end{proposition}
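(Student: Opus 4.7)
The plan is to reduce the claim, for each vector separately, to the corollary of Proposition~\ref{prop:soln-space} applied to the partially quantified specification $\varphi^{(i)}$. Recall that ``effectively realizes'' means that each $g_i$ has support in $X\cup\{y_{i+1},\ldots,y_n\}$ and realizes $y_i$ when $\varphi^{(i)}$ is viewed as a single-output specification; the compositional back-substitution procedure described just before Definition~\ref{def:cb0-cb1} then produces actual Skolem functions $f_1,\ldots,f_n$ of $X$ realizing $Y$ in $\varphi(X,Y)$. So it suffices to check the per-index realization property.

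First, I would unfold the definitions. By Definition~\ref{def:cb0-cb1}, $\cba{i}$ equals $\neg\bigl(\varphi^{(i)}[\subst{y_i}{0}]\bigr)$ since substitution of a constant commutes with negation, and similarly $\neg\cbb{i}$ equals $\varphi^{(i)}[\subst{y_i}{1}]$. The support of $\varphi^{(i)}=\exists y_1\ldots\exists y_{i-1}\,\varphi$ is contained in $X\cup\{y_i,\ldots,y_n\}$, and after substituting a constant for $y_i$ both formulas have support in $X\cup\{y_{i+1},\ldots,y_n\}$, matching the support requirement.

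Next, I would apply the corollary to Proposition~\ref{prop:soln-space}, viewing $\varphi^{(i)}$ as a relational specification with ``inputs'' $X\cup\{y_{i+1},\ldots,y_n\}$ and sole ``output'' $y_i$. The corollary states that both $\varphi^{(i)}[\subst{y_i}{1}]$ and $\neg\varphi^{(i)}[\subst{y_i}{0}]$ realize $y_i$ in $\varphi^{(i)}$. By the identifications above, these are respectively $\neg\cbb{i}$ and $\cba{i}$. Hence, componentwise, both vectors $\cbvasimp=(\cba{1},\ldots,\cba{n})$ and $\neg\cbvbsimp=(\neg\cbb{1},\ldots,\neg\cbb{n})$ satisfy the required realization condition at every index $i$, which is the definition of effectively realizing $Y$ in $\varphi(X,Y)$.

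There is no real mathematical obstacle here; the only thing that needs care is the bookkeeping of signs and substitutions. In particular, one must be attentive to the fact that it is $\neg\cbb{i}$ (not $\cbb{i}$) that lines up with the $\varphi^{(i)}[\subst{y_i}{1}]$ branch of the corollary — which is exactly why the statement negates $\cbvbsimp$ but not $\cbvasimp$ — and that substitution of the Boolean constants $0,1$ commutes with $\neg$ so that $\cba{i}$ may be rewritten as $\neg\bigl(\varphi^{(i)}[\subst{y_i}{0}]\bigr)$. Once these identifications are in place, the proof reduces to two one-line applications of the corollary.
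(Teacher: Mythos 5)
Your proof is correct and follows essentially the same route the paper takes: the paper justifies Proposition~\ref{prop:cb1-realizes} only informally (via the discussion after Definition~\ref{def:cb0-cb1}), but that discussion is precisely your argument — identify $\cba{i}$ with $\neg\bigl(\varphi^{(i)}[\subst{y_i}{0}]\bigr)$ and $\neg\cbb{i}$ with $\varphi^{(i)}[\subst{y_i}{1}]$, check the support condition, and invoke the single-output corollary of Proposition~\ref{prop:soln-space} on each $\varphi^{(i)}$. Your version simply makes the sign/substitution bookkeeping explicit where the paper says the result ``follows immediately.''
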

Proposition~\ref{prop:cb1-realizes} shows that it suffices to compute
$\cbvasimp$ (or $\cbvbsimp$) from $\varphi(X,Y)$ in order to solve the
synthesis problem.  In the remainder of the paper, we show how to
achieve this compositionally and in parallel by first computing
refinements of $\cba{i}$ (resp. $\cbb{i}$) for all $i \in \{1,\ldots
n\}$, and then using John et al's CEGAR-based
technique~\cite{fmcad2015:skolem} to abstract them to the desired
$\cba{i}$ (resp. $\cbb{i}$).  Throughout the paper, we use
$\cbar{i}$ and $\cbbr{i}$ to denote refinements of $\cba{i}$ and
$\cbb{i}$ respectively.

\section{Exploiting compositionality}
\label{sec:compose}
Given a specification $\varphi(X,Y)$, one way to synthesize $y_1,
\ldots y_n$  is to decompose $\varphi(X,Y)$ into
sub-specifications, solve the synthesis problems for the
sub-specifications in parallel, and compose the solutions to the
sub-problems to obtain the overall solution.  A DAG representation of
$\varphi(X,Y)$ provides a natural recursive decomposition of the
specification into sub-specifications.  Hence, the key technical
question relates to compositionality: how do we compose solutions to
synthesis problems for sub-specifications to obtain a solution to the
synthesis problem for the overall specification?  This problem is not
easy, and no state-of-the-art tool for Boolean functional synthesis
uses such compositional reasoning.

Our compositional solution to the synthesis problem is best explained
in three steps.  First, for a simple, yet representationally complete,
class of DAGs representing $\varphi(X,Y)$, we present a lemma that
allows us to do compositional synthesis at each node of such a DAG.
Next, we show how to use this lemma to design a parallel synthesis
algorithm. Finally, we extend our lemma, and hence the scope of our
algorithm, to significantly more general classes of DAGs.

\subsection{Compositional synthesis in AND-OR DAGs}
\label{sec:simple}
For simplicity of exposition, we first consider DAGs with internal
nodes labeled by only AND and OR operators (of arbitrary arity). 
Fig.~\ref{fig:decomp} shows an example of such a DAG.  Note that this
class of DAGs is representationally complete for Boolean
specifications, since every specification can be expressed in negation
normal form (NNF).  In the previous section, we saw that computing
$\cba{i}(\varphi)$ or $\cbb{i}(\varphi)$ for all $i$ in $\{1, \ldots
n\}$ suffices for purposes of synthesis.
The following lemma shows the relation between $\cba{i}$ and $\cbb{i}$
at an internal node $N$ in the DAG and the corresponding formulas at
the node's children, say $c_1, \ldots c_k$.
\begin{lemma}[Composition Lemma]
\label{prop:comp}
Let $\Phi(N) = \op(\Phi(c_1),\ldots, \Phi(c_k))$, where $\op=\vee$ or
$\op=\wedge$. Then, for each $1\leq i\leq n$:
\begin{align}
  \left(\bigwedge_{j=1}^k \cba{i}(c_j)\right) \leftrightarrow \cba{i}(N) \quad\mbox{and}\quad
  \left(\bigwedge_{j=1}^k \cbb{i}(c_j)\right) \leftrightarrow \cbb{i}(N) \mbox{ if } \op = \vee\label{eq:vee}\\
  \left(\bigvee_{j=1}^k \cba{i}(c_j)\right) \rightarrow \cba{i}(N) \quad\mbox{and}\quad
  \left(\bigvee_{j=1}^k \cbb{i}(c_j)\right) \rightarrow \cbb{i}(N) \mbox{ if } \op = \wedge\label{eq:wedge}
\end{align}
\end{lemma}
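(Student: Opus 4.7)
The plan is to prove both parts by directly unfolding the definitions of $\cba{i}$ and $\cbb{i}$ from Definition~\ref{def:cb0-cb1}, and then using two standard facts about existential quantification: it distributes exactly over $\vee$, but only one-way over $\wedge$. This asymmetry is exactly what will produce an equivalence in (\ref{eq:vee}) but only an implication in (\ref{eq:wedge}).

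For the case $\op = \vee$, I would start with $\Phi(N) = \bigvee_{j=1}^k \Phi(c_j)$ and push the existential block $\exists y_1 \cdots \exists y_{i-1}$ inside the disjunction to obtain $\exists y_1 \cdots y_{i-1}\,\Phi(N) \leftrightarrow \bigvee_{j=1}^k \exists y_1 \cdots y_{i-1}\,\Phi(c_j)$. Negating both sides via de Morgan yields $\neg \exists y_1 \cdots y_{i-1}\,\Phi(N) \leftrightarrow \bigwedge_{j=1}^k \neg \exists y_1 \cdots y_{i-1}\,\Phi(c_j)$. Finally, substituting $y_i \mapsto 0$ on both sides preserves the equivalence (substitution is a congruence for Boolean connectives and does not interact with the outer $y_j$ with $j \neq i$), giving $\cba{i}(N) \leftrightarrow \bigwedge_{j=1}^k \cba{i}(c_j)$. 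The same argument with $y_i \mapsto 1$ gives the $\cbb{i}$ equivalence.

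For the case $\op = \wedge$, I would use the one-sided distributivity $\exists y_1 \cdots y_{i-1}\, \bigwedge_{j=1}^k \Phi(c_j) \rightarrow \bigwedge_{j=1}^k \exists y_1 \cdots y_{i-1}\,\Phi(c_j)$, which holds because a single witness for the conjunction witnesses each conjunct. Contrapositively, $\bigvee_{j=1}^k \neg\exists y_1 \cdots y_{i-1}\,\Phi(c_j) \rightarrow \neg \exists y_1 \cdots y_{i-1}\,\Phi(N)$. Applying $[y_i \mapsto 0]$ (respectively $[y_i \mapsto 1]$) to both sides — and noting that substitution on a valid implication yields a valid implication — delivers (\ref{eq:wedge}).

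The only subtlety I expect is the reminder that the reverse implication fails for $\op=\wedge$ because different conjuncts may require different witnesses for $y_1,\dots,y_{i-1}$, which is why the lemma only asserts one direction in that case; this should be stated explicitly to justify why we do not claim an equivalence. Aside from that, everything is routine propositional and first-order reasoning, so I would keep the proof short and structured as the two cases above, with one display for pushing the quantifier through $\vee$ and one for pushing it through $\wedge$.
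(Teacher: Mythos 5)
Your proof is correct and follows essentially the same route as the paper's: both hinge on the fact that quantification distributes exactly over one connective but only one-way over the other, the paper phrasing this via $\forall$ after pushing the negation inward while you phrase it via $\exists$ and negate at the end --- a purely dual presentation of the same argument. Your explicit remark on why the converse fails for $\op=\wedge$ is a small, welcome addition not present in the paper.
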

\begin{proof} 
The proof of this lemma follows from Definition~\ref{def:cb0-cb1}.
Consider the case of disjunction $\op=\vee$, i.e., Equation~(\ref{eq:vee}) for $\cba{}$ (the case for $\cbb{}$ is similar). Then 
  \begin{align*}
    \cba{i}(N) &= \neg \exists y_1\ldots y_{i-1} (\Phi(c_1)\vee \ldots \vee\Phi(c_k))[y_i\mapsto 0]\\
    &\longleftrightarrow \forall y_1\ldots y_{i-1}(\neg\Phi(c_1)\wedge \ldots \wedge \neg \Phi(c_k))[y_i\mapsto 0]\\
    &\longleftrightarrow (\forall y_1\ldots y_{i-1}\neg\Phi(c_1))[y_i\mapsto 0]\wedge \ldots \wedge (\forall y_1\ldots y_{i-1}\neg \Phi(c_k))[y_i\mapsto 0]\\
    &\longleftrightarrow \cba{i}(c_1)\wedge\ldots \wedge\cba{i}(c_k)
    \end{align*}

  On the other hand for conjunction $\op=\wedge$, i.e., Equation~(\ref{eq:wedge}) (and similarly for $\cbb{}$), we only have one direction:
  \begin{align*}
    \cba{i}(N) &= \neg \exists y_1\ldots y_{i-1} (\Phi(c_1)\wedge \ldots \wedge\Phi(c_k))[y_i\mapsto 0]\\
    &\longleftrightarrow \forall y_1\ldots y_{i-1}(\neg\Phi(c_1)\vee \ldots \vee \neg \Phi(c_k))[y_i\mapsto 0]\\
    &\longleftarrow(\forall y_1\ldots y_{i-1}\neg\Phi(c_1))[y_i\mapsto 0]\vee \ldots \vee (\forall y_1\ldots y_{i-1}\neg \Phi(c_k))[y_i\mapsto 0]\\
    &\longleftrightarrow \cba{i}(c_1)\vee\ldots \vee\cba{i}(c_k)
  \end{align*}
  This completes the proof.\qed
   \end{proof}

Thus, if $N$ is an OR-node, we obtain $\cba{i}(N)$ and $\cbb{i}(N)$
directly by conjoining $\cba{i}$ and $\cbb{i}$ at its
children. However, if $N$ is an AND-node, disjoining the $\cba{i}$ and
$\cbb{i}$ at its children only gives refinements of $\cba{i}(N)$ and
$\cbb{i}(N)$ (see Equation~(\ref{eq:wedge})). Let us call these
refinements $\cbar{i}(N)$ and $\cbbr{i}(N)$ respectively.  To obtain
$\cba{i}(N)$ and $\cbb{i}(N)$ exactly at AND-nodes, we must use the
CEGAR technique developed in~\cite{fmcad2015:skolem} to iteratively
abstract $\cbar{i}(N)$ and $\cbbr{i}(N)$ obtained above.  More on
this is discussed below.

A CEGAR step involves constructing, for each $i$ from $1$ to $n$, a
Boolean \emph{error formula} $\err_{\cbar{i}}$
(resp. $\err_{\cbbr{i}}$) such that the error formula is unsatisfiable
iff $\cbar{i}(N) \leftrightarrow \cba{i}(N)$ (resp. $\cbbr{i}(N)
\leftrightarrow \cbb{i}(N)$).  A SAT solver is then used to check the
satisfiability of the error formula.  If the formula is unsatisfiable,
we are done; otherwise the satisfying assignment can be used to
further abstract the respective refinement.  This check-and-abstract
step is then repeated in a loop until the error formulas become
unsatisfiable.  Following the approach outlined
in~\cite{fmcad2015:skolem}, it can be shown that if we use
$\err_{\cbar{i}} ~\equiv~ \neg \cbar{i}
~\wedge~ \bigwedge_{j=1}^{i}\left(y_j \leftrightarrow \cbar{j}\right)
~\wedge~ \neg \varphi$ and $\err_{\cbbr{i}} ~\equiv~ \neg \cbbr{i}
~\wedge~ \bigwedge_{j=1}^{i}\left(y_j \leftrightarrow \neg\cbbr{j}\right)
~\wedge~ \neg \varphi$, and perform CEGAR in order from $i = 1$ to
$i=n$, it suffices to gives us $\cba{i}$ and $\cbb{i}$.  For details
of the CEGAR implementation, the reader is referred
to~\cite{fmcad2015:skolem}.  The above discussion leads to a
straightforward algorithm {\textsc{Compute}} (shown as
Algorithm~\ref{alg:composeatop}) that computes $\cbva{N}$ and
$\cbvb{N}$ for a node $N$, using $\cbva{c_j}$ and $\cbvb{c_j}$ for its
children $c_j$.  Here, we have assumed access to a black-box function
{\textsc{Perform\_Cegar}} that implements the CEGAR step.

\begin{algorithm}[h]
\caption{\textsc{Compute(Node $N$)}}
 \label{alg:composeatop}
 \KwIn{A DAG Node $N$ labelled either AND or OR \\ 
 \textbf{Precondition:} Children of $N$, if any, have their  $\cbvasimp$ and $\cbvbsimp$ computed.}
\KwOut{ $\cbva{N}, \cbvb{N}$} 


\If {$N$ is a leaf \hfill \tcp{$\formula{N}$ is a literal/constant; use Definition~\ref{def:cb0-cb1}}} 
	    {
              for all $y_i\in Y$, $\cba{i}(N)= \neg \exists y_1\ldots y_{i-1} (\formula{N}) [y_i\mapsto 0]$\;
              for all $y_i\in Y$, $\cbb{i}(N)= \neg \exists y_1\ldots y_{i-1} (\formula{N}) [y_i\mapsto 1]$\;
	}
	\Else{\tcp{$N$ is an internal node; let its children be $c_1, \ldots c_k$} 
                  
		\If {$N$ is an OR-node}
		{
			\For {\textbf {each} $y_i \in Y$}
			{
				$\cba{i}(N) :=  \cba{i}(c_1) \wedge \cba{i}(c_2)\ldots \wedge \cba{i}(c_k)$\;
				$\cbb{i}(N) :=  \cbb{i}(c_1)  \wedge \cbb{i}(c_2)\ldots \wedge \cbb{i}(c_k)$\;
			}
                }
\If {$N$ is an AND-node}
		{
			\For {\textbf{ each} $y_i \in Y$}
			 {
				$\cbar{i}(N) :=  \cba{i}(c_1) \vee \cba{i}(c_2)\ldots \vee \cba{i}(c_k)$;~~~~~~  	\tcc{ $\cbar{i}(N)\rightarrow \cba{i}(N)$} 
				$\cbbr{i}(N) :=  \cbb{i}(c_1) \vee \cbb{i}(c_2)\ldots \vee \cbb{i}(c_k)$;~~~~~~~~ 	\tcc{$\cbbr{i}(N) \rightarrow \cbb{i}(N)$}
			}	
                  $\left(\cbva{N}, \cbvb{N}\right)= \textsc{Perform\_Cegar}(N, (\cbar{i}(N), \cbbr{i}(N))_{y_i\in Y})$\;
		}
	}
		
        \Return $\left(\cbva{N}, \cbvb{N}\right)$\;
\end{algorithm}

\subsection{A parallel synthesis algorithm}
\label{sec:booalgo}
The DAG representation of $\varphi(X,Y)$ gives a natural, recursive
decomposition of the specification, and also defines a partial order
of dependencies between the corresponding synthesis sub-problems.
Algorithm {\textsc{Compute}} can be invoked in parallel on nodes in
the DAG that are not ordered w.r.t. this partial order, as long as
{\textsc{Compute}} has already been invoked on their children.  This
suggests a simple parallel approach to Boolean functional synthesis.
Algorithm {\textsc{ParSyn}}, shown below, implements this approach,
and is motivated by a message-passing architecture.  We consider a
standard manager-worker configuration, where one out of available $m$
cores acts as the manager, and the remaining $m-1$ cores act as
workers.  All communication between the manager and workers is assumed
to happen through explicit {\bfseries send} and {\bfseries receive}
primitives.
\begin{algorithm}[h!]
\caption{{\textsc{ParSyn}}}
 \label{alg:cparcegar}
  \KwIn{AND-OR DAG with root $Rt$ representing $\varphi(X,Y)$ in NNF form}
  \KwOut{$(g_1, \ldots g_n)$ that effectively realize $Y$ in $\varphi(X,Y)$ \\\nonl \hrulefill}

\tcc{Algorithm for Manager }

Queue $Q$ \; \tcc{Invariant: Q has nodes that can be processed in parallel, i.e., leaves or nodes whose children have their $\cbvasimp$, $\cbvbsimp$ computed.}

Insert all leaves of DAG into $Q$\;

\While {all DAG nodes not processed}
{

		\While {a worker $W$ is idle and $Q$ is not empty}
		{
			Node $N := Q$.front()\;
			{\bfseries send} node $N$ for processing to $W$\; 

			{\bfseries if} $N$ has children $c_1, \ldots c_k$ {\bfseries then} {\bfseries send} $\cbva{c_j},\cbvb{c_j}$ for $1\le j \le k$ to $W$\;                        
		} 

		{\bf wait until} some worker $W'$ processing node $N'$ becomes free\;
                {\bfseries receive} $\left(\cbvasimp, \cbvbsimp\right)$ from
                $W'$, and store as $\left(\cbva{N'}, \cbvb{N'}\right)$\;
                Mark node $N'$ as processed\;
		
		\For {{\bf each} parent node $N''$ of $N'$} 
		{ 
			 {\bfseries if} all children of $N''$  are processed {\bfseries then}
			  insert $N''$ into $Q$
		}
               
} \tcc{All DAG nodes are processed; return $\neg\cbvbsimp$ or
  $\cbvasimp$ from root $Rt$}

	\Return $\left(\neg\cbb{1}(Rt), \ldots \neg\cbb{n}(Rt)\right)$ \hfill \tcp{or alternatively $\left(\cba{1}(Rt), \ldots \cba{n}(Rt)\right)$} 
{\nonl \hrulefill}
        
        \tcc{Algorithm for Worker $W$}
{\bfseries receive} node $N$ to process, and $\cbva{c_j}$, $\cbvb{c_j}$ for every child $c_j$ of $N$, if any\;
$\left(\cbvasimp, \cbvbsimp\right)$ := \textsc{Compute}($N$) \;
{\bfseries send}  $\left(\cbvasimp, \cbvbsimp\right)$ to Manager \;
\end{algorithm}

The manager uses a queue $Q$ of ready-to-process nodes. 
Initially, $Q$ is initialized with the leaf nodes in the DAG, and we
maintain the invariant that all nodes in $Q$ can be processed in
parallel.  If there is an idle worker $W$ and if $Q$ is not empty, the
manager assigns the node $N$ at the front of $Q$ to worker $W$ for
processing.  If $N$ is an internal DAG node, the manager also sends
$\cbva{c_j}$ and $\cbvb{c_j}$ for every child $c_j$ of $N$ to $W$. If
there are no idle workers or if $Q$ is empty, the manager waits for a
worker, say $W'$, to finish processing its assigned node, say $N'$.
When this happens, the manager stores the result sent by $W'$ as
$\cbva{N'}$ and $\cbvb{N'}$.  It then inserts one or more parents
$N''$ of $N'$ in the queue $Q$, if all children of $N''$ have been processed. The above steps are
repeatedly executed at the manager until all DAG nodes have been
processed.  The job of a worker $W$ is relatively simple: on being
assigned a node $N$, and on receiving $\cbva{c_j}$ and $\cbvb{c_j}$
for all children $c_j$ of $N$, it simply executes Algorithm
{\textsc{Compute}} on $N$ and returns $\left(\cbva{N},
\cbvb{N}\right)$.

Note that Algorithm {\textsc{ParSyn}} is guaranteed to progress as
long as all workers complete processing the nodes assigned to them in
finite time.  The partial order of dependencies between nodes ensures
that when all workers are idle, either all nodes have already been
processed, or at least one unprocessed node has $\cbvasimp$ and
$\cbvbsimp$ computed for all its children, if any.

\subsection{Extending the Composition Lemma and algorithms}
\label{sec:extensions}
So far, we have considered DAGs in which all internal nodes were
either AND- or OR-nodes. We now extend our results to more general
DAGs. We do this by generalizing the Composition Lemma to arbitrary Boolean
operators.  Specifically, given the refinements $\cbar{i}(c_j)$ and
$\cbbr{i}(c_j)$ at all children $c_j$ of a node $N$, we show how to
compose these to obtain $\cbar{i}(N)$ and $\cbbr{i}(N)$, when $N$ is
labeled by an arbitrary Boolean operator.  Note that the CEGAR
technique discussed in Section~\ref{sec:simple} can be used to
abstract the refinements $\cbar{i}$ and $\cbbr{i}$ to $\cba{i}$ and
$\cbb{i}$ respectively, at any node of interest.  Therefore, with our
generalized Composition Lemma, we can use compositional synthesis for
specifications represented by general DAGs, even without computing
$\cba{i}$ and $\cbb{i}$ exactly at all DAG nodes.  This gives an
extremely powerful approach for parallel, compositional synthesis.

Let $\formula{N} = \op(\formula{c_1}, \ldots \formula{c_r})$, where
$\op$ is an $r$-ary Boolean operator.  For convenience of notation, we
use $\neg N$ to denote $\neg \formula{N}$, and similarly for other
nodes, in the subsequent discussion.  Suppose we are given
$\cbar{i}(c_j)$, $\cbbr{i}(c_j)$, $\cbar{i}(\neg c_j)$ and
$\cbbr{i}(\neg c_j)$, for $1 \le j \le r$.  We wish to compose these
appropriately to compute $\cbar{i}(N)$, $\cbbr{i}(N)$,
$\cbar{i}(\neg N)$ and $\cbbr{i}(\neg N)$ for $1 \le i \le n$.  Once we
have these refinements, we can adapt
Algorithm~\ref{alg:composeatop} to work for node $N$, labeled by an arbitrary Boolean operator
$\op$.

\begin{figure}[h]
\begin{center}
  \scalebox{0.8}{
\begin{tikzpicture}[sibling distance=4em,
  main node/.style = {shape=circle,    draw, align=center},
  tri node/.style={shape=regular polygon, regular polygon sides=3, draw, scale=0.8},    
  edge from parent/.style={draw},
  no edge below/.style={    
        every child/.append style={solid,       
         edge from parent/.style={dashed}}},
    level 2/.style={level distance=7mm} 
]

  \node[main node] (1) {$\op$}
        child { node[minimum size=0] {}
              child {node[tri node]  {$c_1$}}}
        child { node {}
              child {node[tri node] (a) {$c_2$}}}
        child[missing] { node {}}
        child {node {}
              child {node[tri node] (b){$c_r$}}};  

 
 \node[main node] (2) [right of=1, node distance=5.5cm] {$\op$}
        child { node {$z_1$}[no edge below]
                     child { 
                     node {}
                           child {node[tri node]  {$c_1$}}}}
        child { node (c) {$z_2$}[no edge below]
                     child {
node {}
                           child {node[tri node] {$c_2$}}}}
        child[missing] { node {}}
        child { node (d) {$z_r$}[no edge below]
                     child {
node {}
                           child {node[tri node] {$c_r$}}}}
;

\path(a) -- (b) node[midway]{$\cdots$};  
\path(c) -- (d) node[midway]{$\cdots$};
\end{tikzpicture}
}
\caption{An $\op$ formula with $r$ variables (left) and its decomposition (right)}
\label{fig:decomp2}
\end{center}
\end{figure}
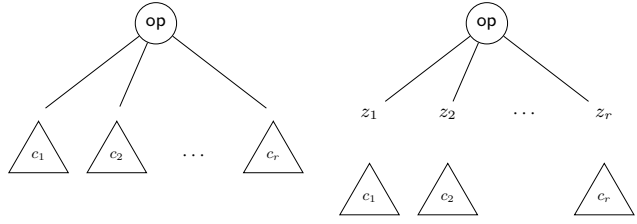

To understand how composition works for $\op$, consider the formula
$\op(z_1, \ldots z_r)$, where $z_1, \ldots z_r$ are fresh Boolean
variables, as shown in Figure \ref{fig:decomp2}.
Clearly, $\formula{N}$ can be viewed as $(\cdots(\op(z_1, \ldots
z_r)[\subst{z_1}{\formula{c_1}}])
\cdots)[\subst{z_r}{\formula{c_r}}]$.  For simplicity of notation, we
write $\op$ instead of $\op(z_1,\ldots,z_r)$ in the following
discussion.  W.l.o.g., let $z_1 \prec z_2 \prec \cdots \prec z_r$ be a
total ordering of the variables $\{z_1, \ldots z_r\}$.  Given $\prec$,
suppose we compute the formulas $\cbar{z_l}(\op)$, $\cbbr{z_l}(\op)$,
$\cbar{z_l}(\neg\op)$ and $\cbbr{z_l}(\neg\op)$ in negation normal
form (NNF), for all $l \in \{1, \ldots r\}$.  Note that these formulas
have support $\{z_{l+1}, \ldots z_r\}$, and do not have variables in
$X \cup Y$ in their support.  We wish to ask if we can compose these
formulas with $\cbar{i}(c_j)$, $\cbbr{i}(c_j)$, $\cbar{i}(\neg c_j)$
and $\cbbr{i}(\neg c_j)$ for $1 \le j \le r$ to compute $\cbar{i}(N)$,
$\cbbr{i}(N)$, $\cbar{i}(\neg N)$ and $\cbbr{i}(\neg N)$, for all $i
\in \{1, \ldots n\}$.  It turns out that we can do this.

Recall that in NNF, negations appear (if at all) only on literals.
Let $\Upsilon_{l, \op}$ be the formula obtained by replacing every
literal $\neg z_s$ in the NNF of $\cbbr{z_l}(\op)$ with a fresh
variable $\overline{z_s}$.  Similarly, let $\Omega_{l, \op}$ be
obtained by replacing every literal $\neg z_s$ in the NNF of
$\cbar{z_l}(\op)$ with the fresh variable $\overline{z_s}$.  The
definitions of $\Upsilon_{l, \neg\op}$ and $\Omega_{l, \neg\op}$ are
similar.  Replacing $\neg z_s$ by a fresh variable $\overline{z_s}$
allows us to treat the literals $z_s$ and $\neg z_s$ independently in
the NNF of $\cbbr{z_l}(\op)$ and $\cbar{z_l}(\op)$.  The ability to
treat these independently turns out to be important when formulating
the generalized Composition Lemma.
Let $\left(\Upsilon_{l,\op}\left[\subst{z_s}{\cbar{i}(\neg
    c_{s})}\right]\left[\subst{\overline{z_s}}{\cbar{i}(c_{s})}\right]\right)_{s=l+1}^r$
denote the formula obtained by substituting $\cbar{i}(\neg c_s)$ for
$z_s$ and $\cbar{i}(c_s)$ for $\overline{z_s}$, for every $s \in
\{l+1, \ldots r\}$, in $\Upsilon_{l,\op}$.  The interpretation of
$\left(\Omega_{l,\op}\left[\subst{z_s}{\cbar{i}(\neg
    c_{s})}\right]\left[\subst{\overline{z_s}}{\cbar{i}(c_{s})}\right]\right)_{s=l+1}^r$
is analogous.  Our generalized Composition Lemma can now be stated as follows. 
\begin{lemma}[Generalized Composition Lemma] 
\label{prop:gen}
Let $\formula{N} = \op(\formula{c_1}, \ldots \formula{c_r})$, where $\op$
is an $r$-ary Boolean operator. For each $1\leq i\leq n$ and $1\leq \ell\leq
r$:
\begin{align*}
1.~&\cbar{i}(c_l) \wedge   
\left(\Omega_{l,\op}\left[\subst{z_s}{\cbar{i}(\neg c_{s})}\right]\left[\subst{\overline{z_s}}{\cbar{i}(c_{s})}\right]\right)_{s=l+1}^r ~\rightarrow~
\cba{i}(N) \nonumber\\
2.~&\cbar{i}(\neg c_l) \wedge  
\left(\Upsilon_{l,\op}\left[\subst{z_s}{\cbar{i}(\neg c_{s})}\right]\left[\subst{\overline{z_s}}{\cbar{i}(c_{s})}\right]\right)_{s=l+1}^r
~\rightarrow~  \cba{i}(N) \nonumber\\
3.~&\cbbr{i}(c_l) \wedge   
\left(\Omega_{l,\op}\left[\subst{z_s}{\cbbr{i}(\neg c_{s})}\right]\left[\subst{\overline{z_s}}{\cbbr{i}(c_{s})}\right]\right)_{s=l+1}^r ~\rightarrow~
\cbb{i}(N) \nonumber\\
4.~&\cbbr{i}(\neg c_l) \wedge 
\left(\Upsilon_{l,\op}\left[\subst{z_s}{\cbbr{i}(\neg c_{s})}\right]\left[\subst{\overline{z_s}}{\cbbr{i}(c_{s})}\right]\right)_{s=l+1}^r
~\rightarrow~  \cbb{i}(N) \nonumber
\end{align*}
If we replace $\op$ by $\neg\op$ above, we get refinements
of $\cba{i}(\neg N)$ and $\cbb{i}(\neg N)$.
\end{lemma}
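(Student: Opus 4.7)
\emph{Proof plan.} We prove statement~1 in detail; statements 2--4 follow by parallel arguments, swapping the role of $y_i = 0$ versus $y_i = 1$ to pass from the $\cbar{i}$ refinements to the $\cbbr{i}$ refinements, and the role of $z_l = 0$ versus $z_l = 1$ to pass from $\Omega_{l,\op}$ to $\Upsilon_{l,\op}$. The $\neg N$ versions follow by treating $\neg\op$ as a fresh $r$-ary operator and repeating the same reasoning. The overall plan is to fix an arbitrary assignment $\sigma$ to $X \cup \{y_{i+1}, \ldots, y_n\}$ that satisfies the left-hand side of statement~1, pick arbitrary values for $y_1, \ldots, y_{i-1}$, and show that the resulting extension together with $y_i = 0$ makes $\Phi(N) = 0$. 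Since $y_1, \ldots, y_{i-1}$ (and then $\sigma$) are arbitrary, this yields $\cba{i}(N)$ at $\sigma$.

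First, with $\sigma$ and $y_1, \ldots, y_{i-1}$ fixed, let $v_s$ denote the Boolean value of $\Phi(c_s)$ at $y_i = 0$ under this combined assignment. Because $\cbar{i}(c) \rightarrow \cba{i}(c)$ and $\cbbr{i}(c) \rightarrow \cbb{i}(c)$ hold by construction, the hypothesis $\cbar{i}(c_l) = 1$ at $\sigma$ forces $v_l = 0$, and for each $s > l$ we obtain the pointwise bounds $\cbar{i}(c_s) \leq \neg v_s$ and $\cbar{i}(\neg c_s) \leq v_s$ at $\sigma$.

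Second, we invoke monotonicity. Because $\Omega_{l,\op}$ is the NNF of $\cbar{z_l}(\op)$ after renaming every negative literal $\neg z_s$ to a fresh positive literal $\overline{z_s}$, it contains only positive occurrences of each of its variables and is therefore monotone in every $z_s$ and every $\overline{z_s}$. Applying the pointwise bounds above, the assumed truth at $\sigma$ of $\Omega_{l,\op}$ with $z_s$ replaced by $\cbar{i}(\neg c_s)$ and $\overline{z_s}$ by $\cbar{i}(c_s)$ implies the truth at $\sigma$ of $\Omega_{l,\op}$ with $z_s$ replaced by $v_s$ and $\overline{z_s}$ by $\neg v_s$. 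Re-identifying $\overline{z_s}$ with $\neg z_s$ at the level of Boolean \emph{values}, this last formula is exactly the NNF of $\cbar{z_l}(\op)$ evaluated at $z_s = v_s$ for $s > l$, so $\cbar{z_l}(\op)(v_{l+1}, \ldots, v_r) = 1$ and hence $\cba{z_l}(\op)(v_{l+1}, \ldots, v_r) = 1$. Unfolding the definition of $\cba{z_l}(\op)$ gives $\op(z_1, \ldots, z_{l-1}, 0, v_{l+1}, \ldots, v_r) = 0$ for \emph{every} choice of $z_1, \ldots, z_{l-1}$; instantiating $z_s = v_s$ for $s < l$ and recalling $v_l = 0$ then yields $\op(v_1, \ldots, v_r) = \Phi(N)[\subst{y_i}{0}] = 0$, which is what was needed.

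The main subtlety, and the step where a naive attempt most plausibly goes wrong, is the monotone-substitution trick. Without renaming $\neg z_s$ to an independent $\overline{z_s}$, the two pointwise bounds $\cbar{i}(\neg c_s) \leq v_s$ and $\cbar{i}(c_s) \leq \neg v_s$ cannot be used simultaneously inside a single Boolean formula while preserving truth, since a monotone formula does not permit positive and negative occurrences of the same literal to be relaxed independently. Treating the two polarities as independent variables throughout the substitution, and collapsing $\overline{z_s}$ back to $\neg z_s$ only after the formula has been evaluated on concrete Boolean values, is exactly what lets positive and negative occurrences compose correctly, and is the heart of the generalization from the AND/OR Composition Lemma to arbitrary operators.
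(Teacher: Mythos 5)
Your proof is correct and follows essentially the same route as the paper's: fix a satisfying assignment of the left-hand side, set $y_i=0$, use the definition of $\cbar{i}(\cdot)$ to bound the substituted values by the actual child values $(v_s,\neg v_s)$, and exploit the positive unateness of $\Omega_{l,\op}$ in the doubled variables to transfer satisfaction to the assignment $(v_s,\neg v_s)$, which collapses back to $\cba{z_l}(\op)$ and forces $\op(v_1,\ldots,v_r)=0$. Your pointwise-bound formulation cleanly subsumes the paper's Claims 1--3 (its $\rho$ and $\widehat{\rho}$), so there is nothing to add.
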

\begin{proof}
We provide a proof for the first implication.  The proofs for the other implications are similar.
Consider an assignment $\eta$ of values to $X \cup \{y_1, \ldots
y_n\}$ such that $\eta$ satisfies the left hand side of implication
(1).  We show below that $\eta$ satisfies $\cba{i}(N)$ as
well.

Let $\eta^\star$ denote an assignment of values to variables that
coincides with $\eta$ for all variables, except possibly for $y_i$.
Formally, $\eta^\star(v) = \eta(v)$ for $v \in X \cup \{y_1, \ldots
y_{i-1}, y_{i+1}, \ldots y_n\}$ and $\eta^\star(y_i) = 0$. We use
$\eta^\star(v)$ to denote the value assigned to variable $v$ in
$\eta^\star$.  If $\psi$ is a Boolean formula, we abuse notation and
use $\eta^\star(\psi)$ to denote the value that $\psi$ evaluates to,
when variables are assigned values according to $\eta^\star$.

Since the right hand side of implication (1) does not have
$y_i$ in its support, it suffices to show that $\eta^\star$ satisfies
$\cba{i}(N)$.  Furthermore, since neither side of implication (1)
depends on $\{y_1, \ldots y_{i-1}\}$, our arguments work for all
values of $\{y_1, \ldots y_{i-1}\}$.  Hence, it suffices to show that
$\eta^\star(N) = 0$.

\begin{claim}[1] $\eta^\star(c_l) = 0$ 
\end{claim}
\begin{proof}
To see why this is true, note that $\eta^\star$ satisfies the left
hand side of implication (1), and hence it satisfies
$\cbar{i}(c_l)$.  Since $\eta^\star(y_i) = 0$, it follows
from the definition of $\cbar{i}(\cdot)$ that
$\eta^\star(c_l) = 0$.
\end{proof}

Note that $\eta^\star$ also satisfies
$\left(\Omega_{l,\op,\prec}\left[\subst{z_s}{\cbar{i}(\neg c_{s})}\right]\left[\subst{\overline{z_s}}{\cbar{i}(c_{s})}\right]\right)_{s=l+1}^r$.
Define $\rho$ to be an assignment of values to
$\{z_{l+1}, \overline{z_{l+1}}, \ldots z_r, \overline{z_r}\}$ such
that $\rho(z_s) = \eta^\star\left(\cbar{i}(\neg c_s)\right)$
and $\rho(\overline{z_s})
= \eta^\star\left(\cbar{i}(c_s)\right)$, for all
$s \in \{l+1, \ldots r\}$.  It follows from the definition above that
$\rho$ is a satisfying assignment of $\Omega_{l,\op,\prec}$.

\begin{claim}[2]
For every $s \in \{l+1, \ldots r\}$,  either $\rho(z_s) = 0$ or $\rho(\overline{z_s}) = 0$. Further, for every $s \in \{l+1, \ldots r\}$,  if $\rho(z_s) = 1$, then $\eta^\star(c_s) = 1$, and if $\rho(\overline{z_s}) = 1$, then $\eta^\star(c_s) = 0$.
\end{claim}
\begin{proof}
The proof of the first statement is by contradiction.  If possible, let $\rho(z_s)= 1$ and $\rho(\overline{z_s}) = 1$ for some $s \in \{l+1, \ldots r\}$.  By definition of $\rho$, we have $\eta^\star\left(\cbar{i}(\neg c_s)\right) = 1$ and
$\eta^\star\left(\cbar{i}(c_s)\right) = 1$.  By definition
of $\cbar{i}(\cdot)$, it follows that both $\neg c_s$ and
$c_s$ evaluate to $0$ for the assignment $\eta^\star$ -- a
contradiction!


For the second statement, note that by definition, if $\rho(z_s) = 1$,
then $\eta^\star\left(\cbar{i}(\neg c_s)\right) = 1$.  Since
$\eta^\star(y_i) = 0$, it follows from the definition of
$\cbar{i}(\cdot)$ that $\eta^\star(\neg(c_s) = 0$.
Equivalently, we have $\eta^\star(c_s) = 1$.  Similarly, if
$\rho(\overline{z_s}) = 1$, then by definition,
$\eta^\star\left(\cbar{i}(c_s)\right) = 1$, and hence
$\eta^\star(c_s) = 0$.
\end{proof}

Finally, we define an assignment $\widehat{\rho}$ of values to
$\{z_{l+1}, \overline{z_{l+1}}, \ldots z_r, \overline{z_r}\}$ as
follows: for all $s \in \{l+1, \ldots r\}$, $\widehat{\rho}(z_s)
= \rho(z_s)$ if either $\rho(z_s) = 1$ or $\rho(\overline{z_s}) = 1$,
and $\widehat{\rho}(z_s) = \eta^\star(c_s)$ otherwise;
$\rho(\overline{z_s})$ is always equal to $\neg \rho(z_s)$.  From
Claim~(2), we can now infer that $\widehat{\rho}(z_s)
= \eta^\star(c_s)$ if either $\rho(z_s) = 1$ or
$\rho(\overline{z_s}) = 1$.  Therefore, we obtain the following claim,

\begin{claim}[3]
$(\widehat{\rho}(z_s), \widehat{\rho}(\overline{z_s}) =$
$\left(\eta^\star(c_s), \neg \eta^\star(c_s)\right)$ for
all $s \in \{l+1, \ldots r\}$.
\end{claim}

With the above claims, we can now prove the first implication/statement of the lemma. From Claim~(1), the values of $(\rho(z_s), \rho(\overline{z_s}))$
are either $(0,1), (1,0)$ or $(0,0)$, for all $s \in \{l+1,\ldots
r\}$. Therefore, $\rho(v) \rightarrow \widehat{\rho}(v)$ for all
$v \in \{z_{l+1}, \overline{z_{l+1}}, \ldots z_r, \overline{z_r}\}$.
Furthermore, since $\Omega_{l,\op,\prec}$ is obtained by replacing all
literals $\neg z_s$ with $\overline{z_s}$ in the NNF of
$\cbar{z_l}(\op)$, $\Omega_{l,\op}$ is positive unate in
$\{z_{l+1}, \overline{z_{l+1}}, \ldots z_r, \overline{z_r}\}$.  It
follows that since $\rho$ satisfies $\Omega_{l,\op,\prec}$ and
$\rho(v) \rightarrow \widehat{\rho}(v)$ for all $v$, $\widehat{\rho}$
also satisfies $\Omega_{l,\op,\prec}$.  

From the definition of $\Omega_{l,\op,\prec}$, it is easy to see that
$\left(\Omega_{l,\op,\prec}[\subst{\overline{z_s}}{\neg
z_s}]\right)_{s=l+1}^r$ is exactly $\cba{z_l}(\op)$.  Therefore,
$\Omega_{l,\op,\prec}$ evaluated at $\widehat{\rho}$ has the same value,
i.e. $1$, as $\cba{z_l}(\op)$ evaluated at $\widehat{\rho}$. From
the definition of $\cba{z_l}(\op)$, it follows that $\op(z_1, \ldots
z_r)$ evaluates to $0$ for every assignment $\zeta$ of values to
$\{z_1, \ldots z_r\}$ such that $\zeta(z_l) = 0$, and $\zeta(z_s)
= \widehat{\rho}(z_s)$ for all $s \in \{l+1, \ldots r\}$.

Now, Claims (1) and (3) imply that $\zeta(z_s) = \eta^\star(c_s)$ for all $s \in \{l, \ldots r\}$.  Therefore, $\eta^\star\left(\op[\subst{z_1}{\formula{c_1}}] \cdots
[\subst{z_r}{\formula{c_r}}]\right) = 0$.  Since $\formula{N}
= \op[\subst{z_1}{\formula{c_1}}] \cdots [\subst{z_r}{\formula{c_r}}]$,
we have $\eta^\star(N) = 0$.  This proves the first statement/implication  of the lemma.

The other implications can be similarly proved following this same template.\qed
\end{proof}

We simply illustrate the idea behind the lemma
with an example here. Suppose $\formula{N} = \formula{c_1} \wedge \neg
\formula{c_2} \wedge (\neg \formula{c_3} \vee \formula{c_4})$, where
each $\formula{c_j}$ is a Boolean function with support $X \cup \{y_1,
\ldots y_n\}$.  We wish to compute a refinement of $\cba{i}(N)$, using
refinements of $\cba{i}(c_j)$ and $\cba{i}(\neg c_j)$ for $j \in
\{1,\ldots 4\}$.
Representing $N$ as $\op(c_1, c_2, c_3, c_4)$, let $z_1, \ldots z_4$
be fresh Boolean variables, not in $X \cup \{y_1, \ldots y_n\}$; then
$\op(z_1, z_2, z_3, z_4) = z_1 \wedge \neg z_2 \wedge (\neg z_3 \vee
z_4)$.  For ease of exposition, assume the ordering $z_1 \prec z_{2}
\prec z_{3} \prec z_{4}$.  By definition, $\cba{z_2}(\op) = \left(\neg
\exists z_1\, (z_1 \wedge \neg z_2 \wedge (\neg z_3 \vee
z_4))\right)[\subst{z_2}{0}]$ $=$ $z_3 \wedge \neg z_4$, and suppose
$\cbar{z_2}(\op) = \cba{z_2}(\op)$. Replacing $\neg z_{4}$ by
$\overline{z_{4}}$, we then get
$\Omega_{2,\op}=z_{3}\wedge\overline{z_{4}}$.

Recalling the definition of $\cbar{z_2}(\cdot)$, if we set $z_{3}=1$,
$z_{4}=0$ and $z_2=0$, then $\op$ must evaluate to $0$ regardless of
the value of $z_1$.  By substituting $\cbar{i}(\neg c_{3})$ for
$z_{3}$ and $\cbar{i}(c_{4})$ for $\overline{z_{4}}$ in
$\Omega_{2,\op}$, we get the formula $\cbar{i}(\neg c_{3}) \wedge
\cbar{i}(c_{4})$.  Denote this formula by $\chi$ and note that its
support is $X \cup \{y_{i+1}, \ldots y_n\}$.  Note also from the
definition of $\cbar{i}(\cdot)$ that if $\chi$ evaluates to $1$ for
some assignment of values to $X \cup \{y_{i+1}, \ldots y_n\}$ and if
$y_i = 0$, 
evaluates to $0$ and $\formula{c_{4}}$ evaluates to $0$, regardless of
the values of $y_1, \ldots y_{i-1}$.  This means that $z_{3} = 1$ and
$z_{4} = 0$, and hence $\cbar{z_2}(\op) = 1$.  If $z_2$ (or
$\formula{c_2}$ can also be made to evaluate to $0$ for the same
assignment of values to $X \cup \{y_i, y_{i+1}, \ldots y_n\}$, then $N
= \op(c_1, \ldots c_r)$ must evaluate to $0$, regardless of the values
of $\{y_1, \ldots y_{i-1}\}$.  Since $y_i = 0$, 
values assigned to $X \cup \{y_{i+1}, \ldots y_n\}$ must therefore be
a satisfying assignment of $\cba{i}(N)$.  One way of achieving this is
to ensure that $\cba{i}(c_2)$ evaluates to $1$ for the same assignment
of values to $X \cup \{ y_{i+1}, \ldots y_n\}$ that satisfies $\chi$.
Therefore, we require the assignment of values to $X \cup \{ y_{i+1}, \ldots
y_n\}$ to satisfy $\chi \wedge \cba{i}(c_2)$, or even $\chi \wedge
\cbar{i}(c_2)$.  Since $\chi = \cbar{i}(\neg c_{3}) \wedge
\cbar{i}(c_{4})$, we get $\cbar{i}(c_2) \wedge {\cbar{i}(\neg c_{3})}
\wedge {\cbar{i}(c_{4})}$ as a refinement of $\cba{i}(N)$.

\paragraph{Applying the generalized Composition Lemma:} Lemma~\ref{prop:gen}
suggests a way of compositionally obtaining $\cbar{i}(N)$,
$\cbbr{i}(N)$, $\cbar{i}(\neg N)$ and $\cbbr{i}(\neg N)$ for an
arbitrary Boolean operator $\op$.  Specifically, the disjunction of
the left-hand sides of implications (1) and (2) in
Lemma~\ref{prop:gen}, disjoined over all $l \in \{1, \ldots r\}$ and
over all total orders ($\prec$) of $\{z_1, \ldots z_r\}$, gives a refinement of
$\cba{i}(N)$.  A similar disjunction of the left-hand sides of
implications (3) and (4) in Lemma~\ref{prop:gen} gives a refinement of
$\cbb{i}(\varphi)$. The cases of $\cba{i}(\neg N)$ and $\cbb{i}(\neg
N)$ are similar.  This suggests that for each operator $\op$ that
appears as label of an internal DAG node, we can pre-compute a
template of how to compose $\cbar{i}$ and $\cbbr{i}$ at the children of
the node to obtain $\cbar{i}$ and $\cbbr{i}$ at the node itself.  In
fact, pre-computing this template for $\op = \vee$ and $\op = \wedge$
by disjoining as suggested above, gives us exactly the left-to-right
implications, i.e., refinements of $\cba{i}(N)$ and $\cbb{i}(N)$, as
given by Lemma~\ref{prop:comp}.  We present   
templates for some other common Boolean operators like \emph{if-then-else} in the next subsection.

Once we have pre-computed templates for composing $\cbar{i}$ and
$\cbbr{i}$ at children of a node $N$ to get $\cbar{i}(N)$ and
$\cbbr{i}(N)$, we can use these pre-computed templates in
Algorithm~\ref{alg:composeatop}, just as we did for AND-nodes.  This
allows us to apply compositional synthesis on general DAG
representations of Boolean relational specifications.

\paragraph{Optimizations using partial computations:}

Given $\cbar{i}$ and $\cbbr{i}$ at children of a node $N$, we have
shown above how to compute $\cbar{i}(N)$ and $\cbbr{i}(N)$.  To
compute $\cba{i}(N)$ and $\cbb{i}(N)$ exactly, we can use the CEGAR
technique outlined in Section~\ref{sec:simple}.  While this is
necessary at the root of the DAG, we need not compute $\cba{i}(N)$ and
$\cbb{i}(N)$ exactly at each intermediate node.  In fact, the
generalized Composition Lemma allows us to proceed with $\cbar{i}(N)$
and $\cbbr{i}(N)$.  This suggests some optimizations: (i) Instead of
using the error formulas introduced in Section~\ref{sec:simple}, that
allow us to obtain $\cba{i}(N)$ and $\cbb{i}(N)$ exactly, we can use
the error formula used in~\cite{fmcad2015:skolem}.  The error formula
of~\cite{fmcad2015:skolem} allows us to obtain some Skolem function
for $y_i$ (not necessarily $\cba{i}(N)$ or $\neg\cbb{i}(N)$) using the
sub-specification $\formula{N}$ corresponding to node $N$.  We have
found CEGAR based on this error formula to be more efficient in
practice, while yielding refinements of $\cba{i}(N)$ and $\cbb{i}(N)$.
In fact, we use this error formula in our implementation. (ii) We
can introduce a \emph{timeout} parameter, such that
$\cbva{N},\cbvb{N}$ are computed exactly at each internal node until
we timeout happens. Subsequently, for the nodes still under process,
we can simply combine $\cbar{i}$ and $\cbbr{i}$ at their children
using our pre-computed composition templates, and not invoke CEGAR at
all.  The only exception to this is at the root node of the DAG where
CEGAR must be invoked.

\subsection{Application of Lemma \ref{prop:gen} through Examples}
In this subsection, with the help of examples, we present the computation of the $\cba{.}$ and $\cbb{.}$ templates for various boolean formulae. We also compare these templates with those obtained by using Algorithm \ref{alg:composeatop}. 
\begin{example}
To begin with, we demonstrate how to derive the templates for the $\wedge$ operator. 

Let $\varphi = c_1 \wedge c_2$ where $c_1, c_2$ are arbitrary boolean formulae and have (a subset of) $X \cup Y$ in their support. As shown in Figure \ref{fig:decomp2}, let $z_1$ and $z_2$ be fresh Boolean variables. We  first compute the relevant $\cbbr{}(.)$ and $\cbar{}(.)$'s for $z_1 \wedge z_2$ and then use Lemma \ref{prop:gen} to compute the template for $\cba{i}(\varphi)$ and $\cbb{i}(\varphi)$.

Let $F = z_1 \wedge z_2$.  Without loss of generality, let the ordering be : $z_1 \prec z_{2} $. Given an ordering and boolean function $F$, we compute the $\cba{}(.)$ and $\cbb{}(.)$ sets for each variable as follows: $F[ \subst{z_1}{1}] = z_2$; $F[\subst{z_1}{0}] = 0$

Therefore, by definition~\ref{def:cb0-cb1}, $\cba{1}(F) = 1$ and $\cbb{1}(F) = \neg z_2$. On existentially quantifying $z_1$ from $F$, we get:
$(\exists z_1 F) =  (z_2 \vee 0) = z_2$, i.e.,  $\exists z_1 F[ \subst{z_2}{1}] = 1$; $F[\subst{z_2}{0}] = 0$. Therefore, we have $\cba{2}(F) = 1$ and $\cbb{2}(F) = 0$.

Using  the Generalized Compositional lemma, (Lemma \ref{prop:gen}), we get:
\begin{equation*}
\begin{aligned}
\cbar{i}(c_1)  \wedge 1 \rightarrow~ \cba{i}(\varphi)  ;  \\
\cbar{i}(\neg c_1) \wedge 0 \rightarrow~ \cba{i}(\varphi)  ;  \\
\cbar{i}( c_2) \wedge 1 \rightarrow~ \cba{i}(\varphi) ; \\
\cbar{i}( \neg c_2) \wedge 0 \rightarrow~ \cba{i}(\varphi) ; 
\end{aligned}
\end{equation*}

On disjunction of the terms on the LHS, we get the following template 

\(\cbar{i}(c_1) \wedge \cbar{i}(c_2) ~\rightarrow~ \cba{i}(\varphi) \) ;

which is exactly the same as that proved in Lemma \ref{prop:comp}.

\end{example}
\begin{example}

We now consider the $\vee$ operator. Let $\varphi = c_1 \vee c_2$ where $c_1, c_2$ are arbitrary boolean formulae and have (a subset of) $X \cup Y$ in their support.
As beforelet $z_1$ and $z_2$ be fresh Boolean variables. 
Let $F = z_1 \vee z_2$.  Without loss of generality, let the ordering be : $z_1 \prec z_{2} $.

Then, $F[ \subst{z_1}{1}] = 1$; $F[\subst{z_1}{0}] = z_2$. Therefore, by definition~\ref{def:cb0-cb1}, $\cba{1}(F) = \neg z_2$ and $\cbb{1}(F) = 0$. On existentially quantifying $z_1$ from $F$, we get: $\exists z_1 F =  z_2 \vee 1 = 1$, and so $\cba{2}(F) = 0$ and $\cbb{2}(F) = 0$.

Again, using the Generalized Compositional lemma, (Lemma \ref{prop:gen}), we get:
\begin{equation*}
\begin{aligned}
\cbar{i}(c_1)  \wedge \cbar{i}(c_2) \rightarrow~ \cba{i}(\varphi)  ;  \\
\cbar{i}(\neg c_1) \wedge 0 \rightarrow~ \cba{i}(\varphi)  ;  \\
\cbar{i}( c_2) \wedge 0 \rightarrow~ \cba{i}(\varphi) ; \\
\cbar{i}( \neg c_2) \wedge 0 \rightarrow~ \cba{i}(\varphi) ; 
\end{aligned}
\end{equation*}

\end{example}

By disjuncting the terms in the LHS, we get the following template 

\(\cbar{i}(c_1) \wedge \cbar{i}(c_2) ~\rightarrow~ \cba{i}(\varphi) \) ;

which is, again, the same as that proved in Lemma \ref{prop:comp}.

\begin{example}
Now consider the \emph{if-then-else} or the $\ite$ operator. Let $\varphi = \ite(c_1, c_2, c_3)$ where $c_1, c_2, c_3$ are arbitrary boolean formulae and have (a subset of) $X \cup Y$ in their support. As before, let $z_1$, $z_2$ and $z_3$ be fresh Boolean variables.  We  first compute the relevant $\cbbr{}(.)$ and $\cbar{}(.)$'s for $\ite(z_1, z_2, z_3)$ and then use Lemma \ref{prop:gen} to compute the template for $\cba{i}(\varphi)$ and $\cbb{i}(\varphi)$.

Let $F =  \ite(z_1, z_2, z_3)$. This means that if $z_1$ evaluates to true, then $F = z_2$ else $F = z_3$. Let the ordering be : $z_1 \prec z_{2} \prec z_{3}$.

Then, $F[ \subst{z_1}{1}] = z_2$; $F[\subst{z_1}{0}] = z_3$ and by definition, $\cba{1}(F) = \neg z_3$ and $\cbb{1}(F) = \neg z_2$. On existentially quantifying $z_1$ from $F$, we get: $\exists z_1 F = z_2 \vee z_3$. 

To compute $\cba{2}(F)$ and $\cbb{2}(F)$, $\exists z_1 F [\subst{z_2}{1}]  = 1 ;  \exists z_1 F[ \subst{z_2}{0}] = z_3$, and so $\cba{2}(F) = \neg z_3$ and $\cbb{2}(F) = 0$.

Finally, on existentially quantifying $z_1$ and $z_2$ from $F$, we get:
$\exists z_1 \exists z_2 F = 1$; which gives $\cba{3}(F) = \cbb{3}(F) = 0$

From Lemma \ref{prop:gen}, we have:
\begin{equation*}
\begin{aligned}
\cbar{i}(c_1) \wedge \cbar{i}(c_3) ~\rightarrow~ \cba{i}(\varphi) \\
\cbar{i}(\neg c_1) \wedge \cbar{i}(c_2) ~\rightarrow~ \cba{i}(\varphi) \\
\cbar{i}( c_2) \wedge \cbar{i}(c_3) ~\rightarrow~ \cba{i}(\varphi) 
\end{aligned}
\end{equation*}


Combining these terms, we get the template for $\cba{i}(\varphi)$ as:\\
$(\cbar{i}(c_1) \wedge \cbar{i}(c_3)) \vee (\cbar{i}(\neg c_1) \wedge \cbar{i}(c_2))  \vee ( \cbar{i}( c_2) \wedge \cbar{i}(c_3)) \rightarrow \cba{i}(\varphi)$ \\

Similarly, the template for $\cbb{i}(\varphi)$ is: \\
$(\cbbr{i}(c_1) \wedge \cbbr{i}(c_3)) \vee (\cbbr{i}(\neg c_1) \wedge \cbbr{i}(c_2)) \vee ( \cbbr{i}( c_2) \wedge \cbbr{i}(c_3)) \rightarrow \cbb{i}(\varphi)$ \\

Note that, we can also represent $\ite(c_1, c_2, c_3)$ as a formula G containing only AND, OR and NOT operators and derive the template directly using Lemma \ref{prop:comp}. That is, let $G = (c_1 \wedge c_2) \vee (\neg c_1 \wedge c_3)$. Using Algorithm \ref{alg:composeatop} (and not doing Step 14 of \textsc{Perform\_Cegar}), we get the following:

\(\cbar{i}(c_1) \vee \cbar{i}(c_2)) \wedge (\cbar{i}(\neg c_1) \vee \cbar{i}(c_3)) \rightarrow \cba{i}(\varphi) \)

On simplication we get,

$(\cbbr{i}(c_1) \wedge \cbbr{i}(c_3)) \vee (\cbbr{i}(\neg c_1) \wedge \cbbr{i}(c_2)) \vee ( \cbbr{i}( c_2) \wedge \cbbr{i}(c_3)) \rightarrow \cbb{i}(\varphi)$ 

Here again, the templates given by Lemma \ref{prop:comp} and Lemma \ref{prop:gen} are the same. However, in the next example we consider a boolean function where the two differ.
\end{example}

\begin{example}

With the help this example, we demonstrate that Lemma \ref{prop:gen} can give better underapproximations than the approach presented in Algorithm \ref{alg:composeatop}.

Let $\varphi = (c_1 \oplus c_2) \wedge (c_1 \oplus c_3)$ where $c_1, c_2, c_3$ are arbitrary boolean formulae and have (a subset of) $X \cup Y$ in their support. As before, let $z_1$, $z_2$ and $z_3$ be fresh Boolean variables. Let $F = (z_1 \oplus z_2) \wedge (z_1 \oplus z_3)$

Without loss of generality, let the ordering be : $z_1 \prec z_{2} \prec z_{3}$. Then, $F[ \subst{z_1}{1}] = \neg z_2 \wedge \neg z_3$; $F[\subst{z_1}{0}] = z_2 \wedge z_3$. By definition, $\cba{1}(F) = \neg z_2 \vee  \neg z_3$ and $\cbb{1}(F) = z_2 \vee z_3$.  On existentially quantifying $z_1$ from $F$, we get:
$\exists z_1 F = (\neg z_2 \wedge \neg z_3) \vee (z_2 \wedge z_3)$. To compute $\cba{2}(F)$ and $\cbb{2}(F)$, $\exists z_1 F [\subst{z_2}{1}]  = z_3 ;  \exists z_1 F[ \subst{z_2}{0}] =  \neg z_3$. Therefore, $\cba{2}(F) = z_3$ and $\cbb{2}(F) = \neg z_3$

Again, on existentially quantifying $z_1$ and $z_2$ from $F$, we get:
$\exists z_1 \exists z_2 F = 1$ and $\cba{3}(F) = \cbb{3}(F) = 0$.

Using the compositional lemma we get:

\begin{equation*}
\begin{aligned}
\cbar{i}(c_1) \wedge  (\cbar{i}(c_2) \vee \cbar{i}(c_3))~\rightarrow~ \cba{i}(\varphi)  \\
\cbar{i}(\neg c_1) \wedge (\cbar{i}(\neg c_2) \vee \cbar{i}( \neg c_3))~\rightarrow~ \cba{i}(\varphi)  \\
\cbar{i}( c_2) \wedge \cbar{i}(\neg c_3) ~\rightarrow~ \cba{i}(\varphi) \\
\cbar{i}( \neg c_2) \wedge \cbar{i}(c_3) ~\rightarrow~ \cba{i}(\varphi) \\
\end{aligned}
\end{equation*}

Disjunction of the terms on the LHS, allows us to get:

\( (\cbar{i}(c_1) \wedge  (\cbar{i}(c_2)) \vee  ( \cbar{i}(c_1) \wedge \cbar{i}(c_3)) \vee (\cbar{i}(\neg c_1) \wedge \cbar{i}(\neg c_2)) \vee ( \cbar{i}(\neg c_1) \wedge \cbar{i}(\neg c_3)) \\ \vee (\cbar{i}( c_2) \wedge \cbar{i}(\neg c_3) ) \vee (\cbar{i}( \neg c_2) \wedge \cbar{i}(c_3)) ~\rightarrow~ \cba{i}(\varphi) \) 

However, if we represent $\varphi = (c_1 \oplus c_2) \wedge (c_1 \oplus c_3)$ as a boolean formula containing AND's and OR's as  $\varphi = ((\neg c_1 \wedge c_2) \vee (c_1  \wedge \neg c_2)) \wedge (((\neg c_1 \wedge c_3) \vee (c_1  \wedge \neg c_3))$. Using Algorithm \ref{alg:composeatop} (and not doing Step 14 of \textsc{Perform\_Cegar}), we only get the following: \( (\cbar{i}(c_1) \wedge  \cbar{i}(c_2)) \vee (\cbar{i}(c_1) \wedge \cbar{i}(c_3)) \vee (\cbar{i}(\neg c_1)  \wedge \cbar{i}(\neg c_2)) \vee (\cbar{i}(\neg c_1) \wedge \cbar{i}(\neg c_3))~\rightarrow~ \cba{i}(\varphi) \);

Note that in addition to the terms above,  the template for $\cba{i}(\varphi)$ derived using Lemma \ref{prop:gen} also has the additional terms
\( (\cbar{i}( c_2) \wedge \cbar{i}(\neg c_3)) \vee  (\cbar{i}( \neg c_2) \wedge \cbar{i}(c_3)) \).  It can easily be seen that these two terms are necessary, if $c_2 \neq c_3$ then $\varphi$ will not evaluate to true.
This example shows Lemma \ref{prop:gen} can give better underapproximations than Lemma \ref{prop:comp} for complex boolean formulae.
\end{example}

\section{Experimental results} \label{results}
\label{sec:expt}

\subsubsection*{Experimental methodology.}
We have implemented Algorithm~\ref{alg:cparcegar} with the error
formula from~\cite{fmcad2015:skolem} used for CEGAR in
Algorithm~\ref{alg:composeatop} (in function
{\textsc{Perform\_Cegar}), as described at the end of
Section~\ref{sec:extensions}.  We call this implementation $\parcegar$
in this section, and compare it with the following algorithms/tools:
$(i)$ $\cegarsk$: This is based on the sequential algorithm for
conjunctive formulas, presented in~\cite {fmcad2015:skolem}.  For
non-conjunctive formulas, the algorithm in~\cite{fmcad2015:skolem},
and hence $\cegarsk$, reduces to~\cite{Jian,Trivedi}.  $(ii)$
$\rsynth$: The {\it RSynth} tool as described in \cite{rsynth}.
$(iii)$ $\bloqqer$: As prescribed in~\cite{bierre}, we first generate
special QRAT proofs using the preprocessing tool \textsf{bloqqer}, and
then generate Boolean function vectors from the proofs using the
$\qrat$ tool.

Our implementation of $\parcegar$, available online
at~\cite{tacas2017:benchmarks}, makes extensive use of the
ABC~\cite{abc-tool} library to represent and manipulate Boolean
functions as AIGs.  We also use the default SAT solver provided by
ABC, which is a variant of MiniSAT.  We present our evaluation on
three different kinds of \emph{benchmarks}.
\begin{enumerate} 
\item{\em Disjunctive Decomposition Benchmarks}: Similar to~\cite{fmcad2015:skolem}, these benchmarks were generated
by considering some of the larger sequential circuits in the HWMCC10 benchmark suite, and  formulating 
the problem of disjunctively decomposing each circuit into components  as a problem of synthesizing a vector of Boolean functions.  Each generated benchmark is of the form $\exists Y \varphi(X,Y)$ where $\exists X (\exists Y \varphi(X,Y))$ is $\true$. However, unlike \cite{fmcad2015:skolem}, where each benchmark (if not already a conjunction of factors) had to be converted into factored form using Tseitin encoding (which introduced additional variables), we have used these benchmarks without Tseitin encoding.
\item {\em Arithmetic Benchmarks}:
These benchmarks were taken from the work described in~\cite{rsynth}. Specifically, the benchmarks considered are {\it floor}, {\it ceiling}, {\it decomposition}, {\it equalization} and {\it intermediate} (see~\cite{rsynth} for details).
\item {\em Factorization Benchmarks}:
We considered the integer factorization problem for different bit-widths,
as discussed in Section \ref{sec:introduction}. 

\end{enumerate}
For each arithmetic and factorization benchmark, we first specified
the problem instance as an { SMT} formula and then used {\it
Boolector}~\cite{boolector} to generate the Boolean version of the
benchmark.  For each arithmetic benchmark, three variants were
generated by varying the bit-width of the arguments of arithmetic
operators; specifically, we considered bit-widths of $32$, $128$ and
$512$. Similarly, for the factorization benchmark, we generated four
variants, using $8$, $10$, $12$ and $16$ for the bit-width of the
product.  Further, as $\bloqqer$ requires the input to be
in \texttt{qdimacs} format and $\rsynth$ in \texttt{cnf} format, we
converted each benchmark into \texttt{qdimacs} and $\texttt{cnf}$
formats using Tseitin encoding~\cite{tseitin68}.  All benchmarks and
the procedure by which we generated them are detailed
in \cite{tacas2017:benchmarks}.

\paragraph{Variable ordering:} 
We used the same ordering of variables for all algorithms.  For each
benchmark, the variables are ordered such that the variable which
occurs in the transitive fan-in of the least number of nodes in the
AIG representation of the specification, appears at the top. For
$\rsynth$ this translated to an interleaving of most of the input and
output variables.

\paragraph{Machine details}: All experiments were performed on a message-passing cluster, where each node had 20 cores and $64$ GB main memory, each core being a 2.20 GHz Intel Xeon processor. 
The operating system was Cent OS 6.5.
For $\cegarsk$, $\bloqqer$, and $\rsynth$, a single core on the
cluster was used.  For all comparisons, $\parcegar$ was executed on
$4$ nodes using $5$ cores each, so that we had both intra-node and
inter-node communication.  
 The maximum time given for execution was 3600 seconds, i.e., 1 hour. We also
restricted the total amount of main memory (across all cores) to be 16GB.
The metric used to compare the different algorithms was the time taken to synthesize Boolean functions. 
\medskip

\noindent{\bf Results.}
Our benchmark suite consisted of
$27$ disjunctive decomposition benchmarks, $15$ arithmetic benchmarks and $4$ factorization benchmarks. These benchmarks are fairly comprehensive in  size i.e., the number of AIG nodes ($|SZ|$) in the benchmark, and the  number of variables ($|Y|$) for which Boolean functions are to be synthesized. 
 Amongst disjunctive decomposition benchmarks, $|SZ|$ varied from $1390$ to $58752$ and $|Y|$ varied from $21$ to  $205$. 
Amongst the arithmetic benchmarks, $|SZ|$ varied from  $442$ to $11253$ and $|Y|$ varied from $31$ to $1024$.
The factorization benchmarks are the smallest and the most complex of the benchmarks, with 
$|SZ|$ varying from $122$ to $502$ and $|Y|$ varying from $8$ to $16$.

We now present the performance of the various algorithms. On $4$ of
the $46$ benchmarks, none of the tools succeeded. Of these, $3$
belonged to the {\em intermediate} problem type in the arithmetic
benchmarks, and the fourth one was the $16$ bit factorization
benchmark.
\medskip

\begin{figure*}[t]
\centering
\begin{subfigure}{2.3in}
  \includegraphics[angle=-90,  scale=0.23] {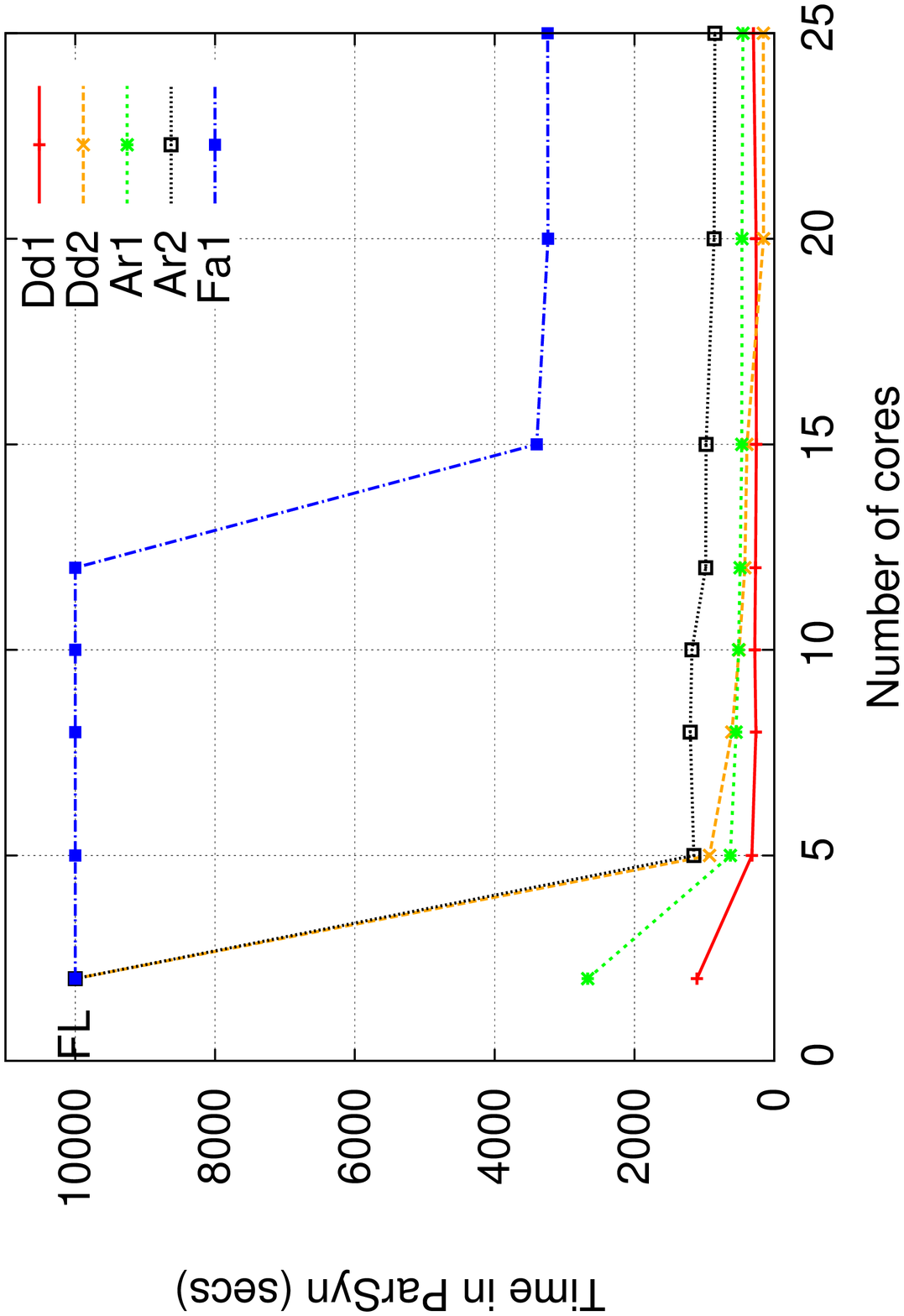} 
\caption{$\parcegar$ on different cores} 
\label{fig:parcegarcores}
\end{subfigure}
\begin{subfigure}{2.3in}
  \includegraphics[angle=-90,  scale=0.25]{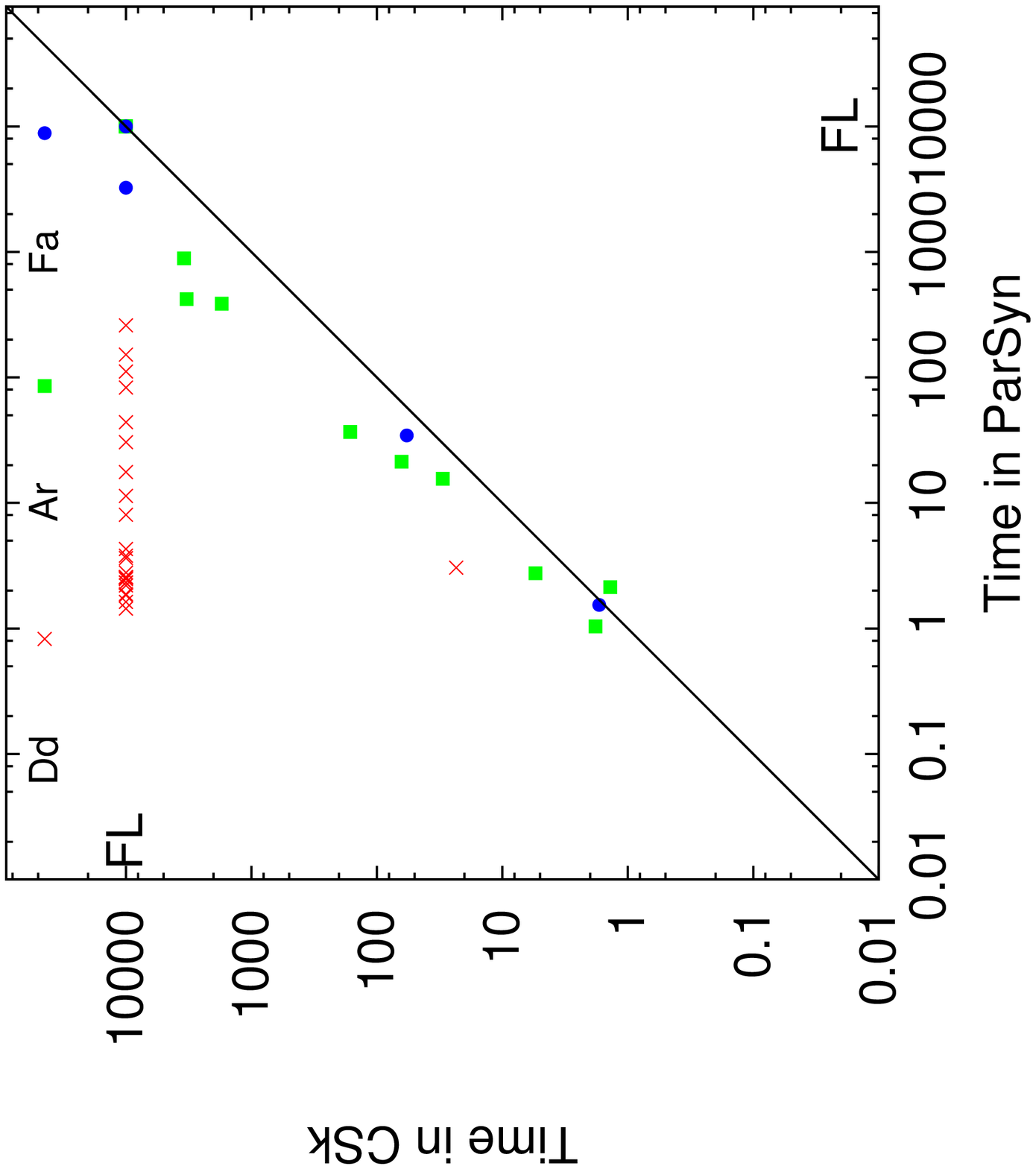}
  \caption{$\parcegar$ vs $\cegarsk$}
  \label{fig:parsk}
\end{subfigure}%
\caption{Legend: \texttt{Ar}: arithmetic, \texttt{Fa}: factorization, \texttt{Dd}: disjunctive decomposition. \texttt{FL}: benchmarks for which the corresponding algorithm was unsuccessful.}
\end{figure*}

\noindent {\it Effect of the number of cores}. 
For this experiment, we chose $5$ of the larger benchmarks.  Of these,
two benchmarks belonged to the disjunctive decomposition category, two
belonged to the arithmetic benchmark category and one was the 12 bit
factorization benchmark.  The number of cores was varied from $2$ to
$25$. With $2$ cores, $\parcegar$ behaves like a sequential algorithm
with one core acting as the manager and the other as the worker with
all computation happening at the worker core. Hence, with $2$ cores,
we see the effect of compositionality without parallelism. For number
of cores $>$ 2, the number of worker cores increase, and the
computation load is shared across the worker cores.

Figure \ref{fig:parcegarcores} shows the results of our
evaluation. The topmost points indicated by \texttt{FL} are instances
for which $\parcegar$ timed out.  We can see that, for all $5$
benchmarks, the time taken to synthesize Boolean function vectors when
the number of cores is $2$ is considerable; in fact, $\parcegar$ times
out on three of the benchmarks.  When we increase the number of cores
we observe that (a) by synthesizing in parallel, we can now solve
benchmarks for which we had timed out earlier, and (b) speedups of
about $4-5$ can be obtained with $5-15$ cores. From $15$ cores to $25$
cores, the performance of the algorithm, however, is largely invariant
and any further increase in cores does not result in further speed up.

To understand this, we examined the benchmarks and found that their
AIG representatation have more nodes close to the leaves than to the
root (similar to the DAG in Figure \ref{fig:decomp}). The time taken
to process a leaf or a node close to a leaf is typically much less
than that for a node near the root.  Furthermore, the dependencies
between the nodes close to the root are such that at most one or two
nodes can be processed in parallel leaving most of the cores
unutilized.  When the number of cores is increased from $2$ to $5 -
15$, the leaves and the nodes close to the leaves get processed in
parallel, reducing the overall time taken by the algorithm. However,
the time taken to process the nodes close to the root remains more or
less the same and starts to dominate the total time taken. At this
point, even if the number of cores is further increased, it does not
significantly reduce the total time taken. This behaviour limits the
speed-ups of our algorithm.  For the remaining experiments, the number
of cores used for $\parcegar$ was 20.

\medskip

\noindent\textit{$\bparcegar$ vs $\bcegarsk$:}  
As can be seen from Figure \ref{fig:parsk}, $\cegarsk$ ran
successfully on only $12$ of the $46$ benchmarks, whereas $\parcegar$
was successful on $39$ benchmarks, timing out on $6$ benchmarks and
running out of memory on $1$ benchmark. Of the benchmarks that
$\cegarsk$ was successful on, $9$ belonged to the arithmetic category,
$2$ to the factorization and $1$ to the disjunctive decomposition
category.  On further examination, we found that factorization and
arithmetic benchmarks (except the {\em intermediate} problems) were
conjunctive formulae whereas disjunctive decomposition benchmarks were
arbitrary Boolean formulas.  Since $\cegarsk$ has been specially
designed to handle conjunctive formulas, it is successful on some of
these benchmarks. On the other hand, since disjunctive decomposition
benchmarks are not conjunctive, $\cegarsk$ treats the entire formula
as one factor, and the algorithm reduces to~\cite{Jian,Trivedi}. The
performance hit is therefore not surprising; it has been shown
in~\cite{fmcad2015:skolem} and~\cite{rsynth} that the algorithms
of~\cite{Jian,Trivedi} do not scale to large benchmarks. In fact,
$\cegarsk$ was successful only on the smallest disjunctive
decomposition benchmark.

\begin{figure*}[t]
\centering
\begin{subfigure}{2.4in}
  \centering
  \includegraphics[angle=-90,  scale=0.28] {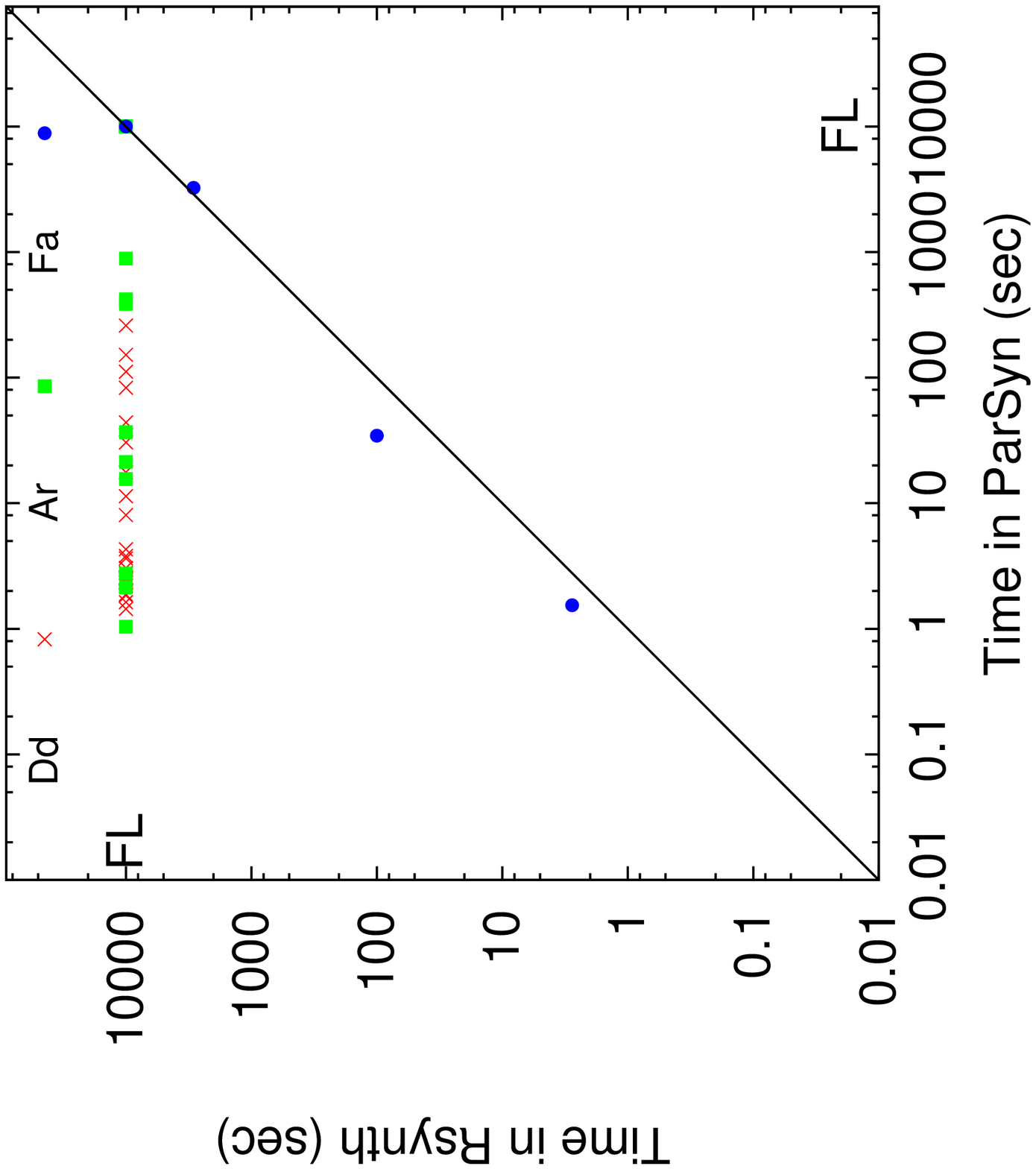}
\caption{$\parcegar$ vs $\rsynth$ } 
\label{fig:cegarrsynth}
\end{subfigure}%
\begin{subfigure}{2.4in}
  \centering
  \includegraphics[angle=-90,  scale=0.28]{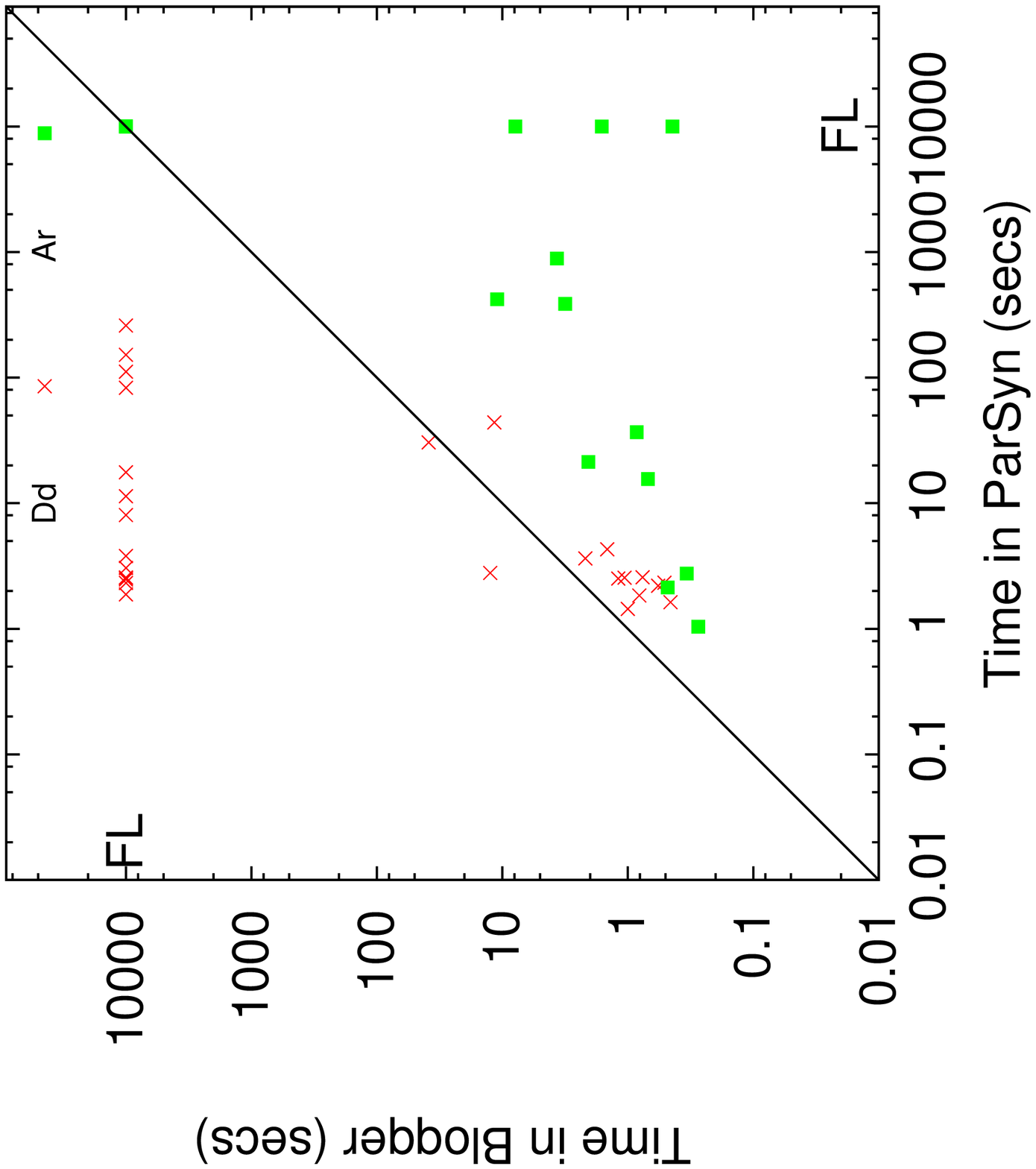}
  \caption{$\parcegar$ vs $\bloqqer$}
\label{fig:cegarbloqqer}
\end{subfigure}
\caption{Legend: \texttt{Ar}: arithmetic, \texttt{Fa}: factorization, \texttt{Dd}: disjunctive decomposition. \texttt{FL}:  benchmarks for which the corresponding algorithm was unsuccessful.}
\end{figure*}
\medskip

\noindent \textit{$\bparcegar$ vs $\rsynth$:}
As seen in Figure \ref{fig:cegarrsynth}, $\rsynth$ was successful only
on $3$ of the $46$ benchmarks; it timed out on $37$ and ran out of
memory on $6$ benchmarks. The $3$ benchmarks that RSynth was
successful on were the smaller factorization benchmarks.  Note that
the arithmetic benchmarks used in \cite{rsynth} are semantically the
same.  In~\cite{rsynth}, custom variable orders were used to construct
the ROBDDs, which resulted in compact ROBDDs.  In our case, we use the
variable ordering heuristic mentioned above (see Sec. 4.1), and
include the considerable time taken to build BDDs from \texttt{cnf}
representation. As mentioned in Section~\ref{sec:introduction}, if we
know a better variable ordering, then the time taken can potentially
reduce.  However, we do not know the optimal variable order for an
arbitrary specification in general.  We also found the memory
footprint of $\rsynth$ to be higher as indicated by the
memory-outs. This is not surprising, as $\rsynth$ uses BDDs to
represent Boolean formula and it is well-known that BDDs can have
large memory requirements.
\medskip

\noindent {\it $\bparcegar$ vs $\bloqqer$:}
Since $\bloqqer$ cannot synthesize Boolean functions for formulas
wherein $\forall X \exists Y \varphi(X,Y)$ is not {\em valid}, we
restricted our comparison to only the disjunctive decomposition and
arithmetic benchmarks, totalling $42$ in number.  From
Figure \ref{fig:cegarbloqqer}, we can see that $\bloqqer$ successfully
synthesizes Boolean functions for $25$ of the $42$ benchmarks.  For
several benchmarks for which it is successful, it outperforms
$\parcegar$.   In line 14 of
Algorithm \ref{alg:composeatop}, \textsc{Perform\_Cegar} makes 
extensive use of the SAT solver, and this is reflected in the time
taken by $\parcegar$. 
However, for the remaining $17$ benchmarks, $\bloqqer$ gave a
{\em Not Verified} message indicating that it could not synthesize Boolean functions for these benchmarks.
In comparison, $\parcegar$ was successful on most of these benchmarks.
\medskip

\noindent{\bf Effect of timeouts on $\bparcegar$.}
Finally, we discuss the effect of the timeout optimization discussed
in Section~\ref{sec:extensions}.  Specifically, for $60$ seconds
(value set through a \emph{timeout} parameter), starting from the
leaves of the AIG representation of a specification, we synthesize
exact Boolean functions for DAG nodes. After timeout, on the
remaining intermediate nodes, we do not invoke the CEGAR step at all,
except at the root node of the AIG.

This optimization enabled us to handle $3$ more benchmarks, i.e.,
$\parcegar$ with this optimization synthesized Boolean function
vectors for all the {\em equalization} benchmarks (in $<$ 340
seconds).  Interestingly, $\parcegar$ without timeouts was unable to
solve these problems. This can be explained by the fact that in these
benchmarks many internal nodes required multiple iterations of the
CEGAR loop to compute exact Boolean functions, which were, however, not
needed to compute the solution at the root node.

\section{Conclusion and future work}
\label{sec:discussion}
In this paper, we have presented the first parallel and compositional
algorithm for complete Boolean functional synthesis from a relational
specification.  A key feature of our approach is that it is agnostic
to the semantic variabilities of the input, and hence applies to a
wide variety of benchmarks.  In addition to the disjunctive
decomposition of graphs and the arithmetic operation benchmarks, we
considered the combinatorially hard problem of factorization and
attempted to generate a functional characterization for it. We found
that our implementation outperforms existing tools in a variety of
benchmarks.

There are many avenues to extend our work. First, the ideas for
compositional synthesis that we develop in this paper could
potentially lead to parallel implementations of other synthesis tools,
such as that described in~\cite{rsynth}.  Next, the 
factorization problem can be generalized to synthesis of inverse functions for
classically hard one-way functions, as long as the function can be
described efficiently by a circuit/AIG.
Finally, we would like to explore improved ways of parallelizing
our algorithm, perhaps exploiting features of specific classes of
problems.


\bibliographystyle{splncs03}

\bibliography{ref}

\begin{thebibliography}{10}
\providecommand{\url}[1]{\texttt{#1}}
\providecommand{\urlprefix}{URL }

\bibitem{akers1978}
Akers, S.B.: Binary decision diagrams. IEEE Trans. Comput.  27(6),  509--516
  (Jun 1978), \url{http://dx.doi.org/10.1109/TC.1978.1675141}

\bibitem{baader1999}
Baader, F.: On the complexity of boolean unification. Tech. rep. (1999)

\bibitem{BCK09}
Ba{\~{n}}eres, D., Cortadella, J., Kishinevsky, M.: A recursive paradigm to
  solve boolean relations. {IEEE} Trans. Computers  58(4),  512--527 (2009)

\bibitem{skizzo}
Benedetti, M.: s{K}izzo: A {S}uite to {E}valuate and {C}ertify {QBF}s. In:
  Proc. of CADE. pp. 369--376. Springer-Verlag (2005)

\bibitem{boole1847}
Boole, G.: The Mathematical Analysis of Logic. Philosophical Library (1847),
  \url{https://books.google.co.in/books?id=zv4YAQAAIAAJ}

\bibitem{boudet1989}
Boudet, A., Jouannaud, J.P., Schmidt-Schauss, M.: Unification in boolean rings
  and abelian groups. J. Symb. Comput.  8(5),  449--477 (Nov 1989),
  \url{http://dx.doi.org/10.1016/S0747-7171(89)80054-9}

\bibitem{bryant1986}
Bryant, R.E.: Graph-based algorithms for boolean function manipulation. IEEE
  Trans. Comput.  35(8),  677--691 (Aug 1986),
  \url{http://dx.doi.org/10.1109/TC.1986.1676819}

\bibitem{CEGAR-JACM}
Clarke, E., Grumberg, O., Jha, S., Lu, Y., Veith, H.: {Counterexample-guided
  Abstraction Refinement for Symbolic Model Checking}. J. ACM  50(5),  752--794
  (2003)

\bibitem{deschamps1972}
Deschamps, J.P.: Parametric solutions of boolean equations. Discrete Math.
  3(4),  333--342 (Jan 1972),
  \url{http://dx.doi.org/10.1016/0012-365X(72)90090-8}

\bibitem{rsynth}
Fried, D., Tabajara, L.M., Vardi, M.Y.: {BDD}-based boolean functional
  synthesis. In: CAV (2016)

\bibitem{Jian}
Jiang, J.H.R.: Quantifier elimination via functional composition. In: Proc. of
  CAV. pp. 383--397. Springer (2009)

\bibitem{jiang2}
Jiang, J.H.R., Balabanov, V.: {Resolution proofs and {S}kolem functions in
  {QBF} evaluation and applications}. In: Proc. of CAV. pp. 149--164. Springer
  (2011)

\bibitem{JGB05}
Jobstmann, B., Griesmayer, A., Bloem, R.: Program repair as a game. In:
  Computer Aided Verification, LNCS, vol. 3576, pp. 226--238. Springer (2005)

\bibitem{fmcad2015:skolem}
John, A., Shah, S., Chakraborty, S., Trivedi, A., Akshay, S.: Skolem functions
  for factored formulas. In: FMCAD. pp. 73--80 (2015)

\bibitem{aiger}
Kuehlmann, A., Paruthi, V., Krohm, F., Ganai, M.K.: Robust boolean reasoning
  for equivalence checking and functional property verification. IEEE Trans. on
  CAD of Integrated Circuits and Systems  21(12),  1377--1394 (2002),
  \url{http://dblp.uni-trier.de/db/journals/tcad/tcad21.html#KuehlmannPKG02}

\bibitem{KS00}
Kukula, J.H., Shiple, T.R.: Building circuits from relations. In: Computer
  Aided Verification, 12th International Conference, {CAV} 2000, Chicago, IL,
  USA, July 15-19, 2000, Proceedings. pp. 113--123 (2000)

\bibitem{KMPS10}
Kuncak, V., Mayer, M., Piskac, R., Suter, P.: Complete functional synthesis.
  SIGPLAN Not.  45(6),  316--329 (Jun 2010)

\bibitem{abc-tool}
Logic, B., Group, V.: {ABC: A System for Sequential Synthesis and Verification
  }. \url{http://www.eecs.berkeley.edu/~alanmi/abc/}

\bibitem{lowenheim1910}
Lowenheim, L.: {{\"U}ber} die {Aufl{\"o}sung} von {Gleichungen} in {Logischen}
  {Gebietkalkul}. Math. Ann.  68,  169--207 (1910)

\bibitem{macii1998}
Macii, E., Odasso, G., Poncino, M.: Comparing different boolean unification
  algorithms. In: Proc. of 32nd Asilomar Conference on Signals, Systems and
  Computers. pp. 17--29 (2006)

\bibitem{bierre}
Marijn~Heule, M.S., Biere, A.: {Efficient Extraction of Skolem Functions from
  QRAT Proofs}. In: Proc. of FMCAD (2014)

\bibitem{martin1989}
Martin, U., Nipkow, T.: Boolean unification - the story so far. J. Symb.
  Comput.  7(3-4),  275--293 (Mar 1989),
  \url{http://dx.doi.org/10.1016/S0747-7171(89)80013-6}

\bibitem{boolector}
Niemetz, A., Preiner, M., Biere, A.: Boolector 2.0 system description.
  Satisfiability, Boolean Modeling and Computation  9,  53--58 (2014 (published
  2015))

\bibitem{RW87}
Ramadge, P.J., Wonham, W.M.: Supervisory control of a class of discrete event
  processes. SIAM J. Control Optim.  25(1),  206--230 (1987)

\bibitem{tacas2017:benchmarks}
{S. Akshay and S. Chakraborty and A. John and S. Shah}: {Website for TACAS 2017
  Experiments}.
  \url{{https://drive.google.com/drive/folders/0BwmvCTZAETPvVExUQkx6WVEtWWs}}
  (2016)

\bibitem{SRBE05}
Solar{-}Lezama, A., Rabbah, R.M., Bod{\'{\i}}k, R., Ebcioglu, K.: Programming
  by sketching for bit-streaming programs. In: Proceedings of the {ACM}
  {SIGPLAN} 2005 Conference on Programming Language Design and Implementation,
  Chicago, IL, USA, June 12-15, 2005. pp. 281--294 (2005)

\bibitem{Gulwani}
Srivastava, S., Gulwani, S., Foster, J.S.: Template-based program verification
  and program synthesis. STTT  15(5-6),  497--518 (2013)

\bibitem{Trivedi}
Trivedi, A.: Techniques in Symbolic Model Checking. Master's thesis, Indian
  Institute of Technology Bombay, Mumbai, India (2003)

\bibitem{tseitin68}
Tseitin, G.S.: On the complexity of derivation in propositional calculus.
  Structures in Constructive Mathematics and Mathematical Logic, Part II,
  Seminars in Mathematics pp. 115--125 (1968)

\end{thebibliography}



\end{document}